\def\BibTeX{{\rm B\kern-.05em{\sc i\kern-.025em b}\kern-.08em
    T\kern-.1667em\lower.7ex\hbox{E}\kern-.125emX}}
\theoremstyle{plain}
\newtheorem{theorem}{Theorem}
\newtheorem{corollary}{Corollary}
\newtheorem{lemma}[theorem]{Lemma}
\newtheorem{claim}[theorem]{Claim}
\theoremstyle{definition}
\newtheorem{definition}{Definition}
\theoremstyle{remark}
\newtheorem{remark}{Remark}
\newcommand{\qinzi}[1]{{\color{blue} #1}}
\newcommand{\ddegree}[1]{$(#1)$-dynaDegree}
\DeclareMathOperator{\E}{\mathcal{E}}
\DeclareMathOperator{\calB}{\mathcal{B}}
\DeclareMathOperator{\calH}{\mathcal{H}}
\newcommand{\range}{\operatorname{range}}
\newcommand{\interval}{\operatorname{interval}}
\newcommand{\calA}{\mathcal{A}}
\DeclareMathOperator{\low}{low}
\DeclareMathOperator{\high}{high}
\DeclareMathOperator{\pend}{end}
\newcommand*{\affmark}[1][*]{\textsuperscript{#1}}
\begin{document}

\title{Fault-tolerant Consensus \\in Anonymous Dynamic Network 
}

\author{\IEEEauthorblockN{Qinzi Zhang\affmark[1] and Lewis Tseng\affmark[2]*\thanks{*This material is based upon work partially supported by the National Science Foundation under Grant CNS-2238020.}}

\IEEEauthorblockA{
\affmark[1]Boston University, Boston, USA\\
\affmark[2]Clark University, Worcester, USA\\
E-mails: \protect \affmark[1]qinziz@bu.edu, 
\affmark[2]lewistseng@acm.org}
}

\maketitle

\begin{abstract}
This paper studies the feasibility of reaching consensus in an anonymous dynamic network. In our model, $n$ \textit{anonymous} nodes proceed in synchronous rounds. We adopt a hybrid fault model in which up to $f$ nodes may suffer crash or Byzantine faults, and the  \textit{dynamic} message adversary chooses a communication graph for each round. 

We introduce a \textit{stability} property of the dynamic network -- \textit{\ddegree{T,D}} for $T \geq 1$ and $n-1 \geq D \geq 1$ -- which requires that for every $T$ consecutive rounds, any fault-free node must have incoming directed links from at least $D$ distinct neighbors. These links might occur in different rounds during a $T$-round interval. \ddegree{1,n-1} means that the graph is a complete graph in every round. \ddegree{1,1} means that each node has at least one incoming neighbor in every round, but the set of incoming neighbor(s) at each node may change arbitrarily between rounds.

We show that exact consensus is impossible even with \ddegree{1,n-2}. For an arbitrary $T$, we show that for crash-tolerant approximate consensus, \ddegree{T, \lfloor n/2 \rfloor} and $n > 2f$ are together necessary and sufficient, whereas for Byzantine approximate consensus, \ddegree{T, \lfloor (n+3f)/2 \rfloor} and $n > 5f$ are together necessary and sufficient.

\end{abstract}

\begin{IEEEkeywords}
Consensus, Approximate consensus, Byzantine, Impossibility, Message adversary, Dynamic network
\end{IEEEkeywords}

\section{Introduction}

A dynamic network is a natural model for mobile devices equipped with wireless communication capability. Node mobility and unpredictable wireless signal (e.g., due to interference and attenuation) make the systems of mobile devices inherently dynamic. For example, nodes may join, leave, and move around, and communication links between nodes may appear and disappear over time. 

Recent works have studied consensus and other distributed tasks in various formulations of dynamic networks. Following \cite{Kuhn_dynamic_STOC10,Kuhn_dynamic_SIGACTNews11,Kunh_dynamicConsensus_PODC11,Nowak_tight_bound_asymptotic_JACM21,Nowak_Averaging_ICALP15}, we consider a network that does \textit{not} eventually stops changing. That is, there exists a dynamic message adversary that controls and chooses the set of communication links continually. More concretely, we study computability of fault-tolerant consensus in  the anonymous dynamic network model \cite{DiLuna_DynamicAnonymous_DISC23,DiLuna_DynamicAnonymous_PODC15,DiLuna_DynamicAnonymous_PODC23,DiLuna_DynamicAnonymous_OPODIS15} when nodes may crash or become Byzantine.  

\subsection{Anonymous Dynamic Network Model}

We consider a fixed set of $n$ nodes that proceed in \textit{synchronous} rounds and communicate by a \textit{broadcast} primitive. For example, such a broadcast primitive might be implemented by a medium access control (MAC) protocol in a wireless network. In each round, the communication graph is chosen by the dynamic message adversary.  We do \textit{not} assume the existence of a neighbor-discovery mechanism or an acknowledgement from a MAC layer; thus, nodes do not know the set of nodes that received their messages. 

In every round $t$, the adversary first chooses the directed links that are ``reliable'' for round $t$. In other words, the links that are \textit{not} chosen will drop any messages that were sent via them in round $t$. The adversary may use nodes' internal states at the beginning of the round and the algorithm specification to make the choice. We consider \textit{deterministic} algorithms in which nodes update internal states and generate message following the specification deterministically. 

The nodes do not know which links were chosen by the adversary. Each message is then delivered to the sender's outgoing neighbors, as defined by the links chosen by the adversary. After messages are received from incoming neighbors, the nodes transition to new states, and enter round $t+1$. Nodes are assumed to know $n$, the size of the network, and $f$, the upper bound on the number of node faults. 

Compared to prior works on dynamic networks, our model are different from two following perspectives:

\begin{itemize}
    \item \textit{Anonymity}: Nodes are assumed to be anonymous \cite{Angluin_anonymous_STOC80,Aspnes_Anonymous_PopulationProtocol_OPODIS05,Ruppert_HundredsImpossible_DC03,Tseng_PODC22_MAC,AbstractMAC_DISC2009,DiLuna_DynamicAnonymous_DISC23,DiLuna_DynamicAnonymous_PODC23}. That is, nodes are assumed to be identical and nodes do \textit{not} have a unique identity. Instead, nodes may use ``local''  communication ports or rely on the underlying communication layer (e.g., MAC layer) to distinguish messages received from different incoming neighbors. 

    \item \textit{Hybrid faults}: In addition to the message adversary, up to $f$ nodes may crash or have Byzantine behavior.


\end{itemize}

We are interested in the formulation, because most mobile devices are fragile and it is important to tolerate node faults. Moreover, in practice, it it not always straightforward to bootstrap a large-scale dynamic system so that all the nodes have unique and authenticated identity. For example, even though typical MAC protocols assume  unique MAC address, it is a good engineering and security practice \textit{not} to rely on this information in the upper layer.\footnote{There is also a privacy aspect. For example, recent versions of Android and iOS support MAC randomization to prevent from device tracking.} Plus, MAC spoofing is a well-known attack, which will add another layer of complexity in handling identities when designing Byzantine-tolerant algorithms.

This paper aims to answer the following question:

\vspace{10pt}
\hspace{20pt}
\fbox{%
  \parbox{180pt}{%
    
      When is consensus solvable in our anonymous dynamic network model? 
    
  }%
}

\vspace{5pt}

In this paper, we focus on the case when the nodes are able to communicate directly with each other, i.e., a single-hop network. The case of multi-hop communication is left as an interesting future work. Our model is motivated by the applications in drones, robots, and connected vehicles. In this applications, a team of fixed number of nodes are configured to communicate with each other, via dynamic wireless networks, in order to solve certain tasks (e.g., search, dynamic speed configuration, and flocking). In these applications, consensus is a key enabling primitive. 

\subsection{A Stability Property}

Towards our goal, we introduce a \textit{stability} property called ``\textit{\ddegree{T,D}}'' for $T \geq 1$ and $n-1 \geq D \geq 1$, which quantifies the number of distinct incoming neighbors for each fault-free node over any $T$-round interval.\footnote{``dynaDegree'' stands for dynamic degree.} $T$ is assumed to be finite, and both $T$ and $D$ are \textit{unknown} to nodes. We only use these parameters for analysis. We do not assume the graph contains self-loop; hence the parameter $D$ is upper bounded by $n-1$. 

More concretely, \ddegree{T,D} requires that for every $T$ \textit{consecutive} rounds in a given execution, any fault-free node must have incoming links from at least $D$ \textit{distinct} neighbors. These directed links might occur in different rounds during a $T$-round interval. \ddegree{1,n-1} means that the graph is a complete graph in every round. \ddegree{1,1} means that each node has at least one incoming neighbor in every round, but the incoming neighbor(s) may change arbitrarily between rounds. Figure \ref{fig:dynaDegree} presents an example execution in a $3$-node network that satisfies \ddegree{2,1}, but does not satisfy \ddegree{1,1}. 

\subsection{Contributions}

We characterise the feasibility of solving consensus in our model using the stability property. In particular, we have the following contributions: 

\begin{itemize}
    \item We identify how \ddegree{T,D} is related to a recently identified impossibility result by Gafni and Losa \cite{Gafni_TimeNotHealer_SSS23}, which then implies that exact consensus is impossible with \ddegree{1,n-2} even when no node may crash ($f=0$). In other words, there exists an execution such that the network satisfies the stability property \ddegree{1,n-2}, yet nodes are not able to agree on exactly the same output. 

    \item We identify that for crash-tolerant approximate consensus, \ddegree{T, \lfloor n/2 \rfloor} and $n > 2f$ are together necessary and sufficient. 

    \item For Byzantine approximate consensus, \ddegree{T, \lfloor (n+3f)/2 \rfloor} and $n > 5f$ are together necessary and sufficient.

\end{itemize}  
Section \ref{sec:crash} presents our crash-tolerant approximate consensus algorithm. Section \ref{sec:byz} presents our Byzantine approximate consensus algorithm. The impossibility results are presented in Section \ref{s:impossible}, which imply the necessity of the identified conditions. We conclude and discuss extensions in Section \ref{s:discussion}. 

\section{Preliminaries}
\label{s:preliminaries}

\subsection{Our Model: Anonymous Dynamic Network}

We consider a \textit{synchronous} message-passing system consisting of $n$ \textit{anonymous} nodes. Nodes only know $n$ and do \textit{not} have unique identities. For presentation and analysis purpose, we denote the set of nodes as the set of IDs, i.e., $\{1, \dots, n\}$. For brevity, we often denote it by $[n]$. 

\vspace{3pt}
\noindent\textbf{Node Faults.} ~~
We assume that at most $f$ nodes may become faulty. We consider both crash and Byzantine faults.
In the former model, a faulty node may crash and stop execution at any point of time. The latter model assumes faulty nodes may have an arbitrary faulty behavior, including sending different messages to different nodes. The set of faulty nodes is denoted by $\calB$. Nodes that are not faulty are called \textit{fault-free}. The set of fault-free nodes is denoted by $\calH$. 

\vspace{3pt}
\noindent\textbf{Communication and Message Adversary.} ~~
The underlying communication network is modeled as a synchronous dynamic network represented as a dynamic graph $G = (V, E)$, where $V$ is a static set of nodes $[n]$, and $\mathcal{E} : \mathbb{N} \rightarrow \{(u, v)~|~ (u,v) \in E \}$ is a function mapping a round number $t \in \mathbb{N}$ to a set of \textit{directed} links $\E(t)$. For $(u, v) \in \E(t)$, $v$ is said to be $u$'s outgoing neighbor and $u$ is said to be $v$'s incoming neighbor in round $t$. 

In round $t$, only messages sent over $\E(t)$ are delivered. All other messages are lost. We consider a dynamic message adversary that chooses $\E(t)$ in every round $t$. Note that there are different ways of modeling a dynamic network, e.g., using temporal graphs \cite{Nowak_Averaging_ICALP15,Nowak_tight_bound_asymptotic_JACM21}. We adopt the definition from \cite{Kuhn_dynamic_STOC10,Kuhn_dynamic_SIGACTNews11,Kunh_dynamicConsensus_PODC11}. 

We do not assume the existence of self-loop in $E$; however, nodes have the ability to send a message to itself. Such a message delivery \textit{cannot} be disrupted by the message adversary. In other words, a message sent to oneself is always delivered reliably. 

Most prior works assume that each node sends only one message and each message is of size at most $O(\log n)$ bits \cite{Kuhn_dynamic_STOC10,Kuhn_dynamic_SIGACTNews11,Kunh_dynamicConsensus_PODC11,Nowak_Averaging_ICALP15,Nowak_tight_bound_asymptotic_JACM21}. We adopt the same assumption of limited bandwidth of each edge. Section \ref{s:discussion} briefly discusses when each link has different bandwidth constraints.

\vspace{3pt}
\noindent\textbf{Anonymity and Port Number.} ~~Nodes execute the same code, because they are identical and the only difference is a potentially distinct input given to each node. They communicate with each other via a broadcast primitive. The delivery of messages are determined by the edge set chosen by the message adversary in each round $t$, namely $\E(t)$. 

Following \cite{Angluin_anonymous_STOC80}, we assume that each node has a ``local'' label for each incoming neighbor, i.e., a \textit{unique port number} for each incoming link. The labels are \textit{local} in the sense that two different nodes may use two different ports to identify messages received from the same node; therefore, it is not possible to use such information to assign global unique identities to all nodes in our model, without using consensus. 

The nodes, however, can use ports to distinguish the sender for each received message locally. Since the ports are assumed to be static throughout the execution of the algorithm, upon receipt of two incoming messages $m_1$ and $m_2$, a node $i$ is able to identify that these two messages are from two different senders by using the port numbers. In addition, a node $i$ has the ability to keep track of all the past messages received from a specific port. 

\begin{figure*}[t]
  \centering
  \begin{subfigure}[b]{0.45\textwidth}
    \centering
    \begin{tikzpicture}
  \node[circle, draw, minimum size=1cm] (circle1) at (0,0) {1};
  \node[circle, draw, minimum size=1cm] (circle2) at (2,1) {2};
  \node[circle, draw, minimum size=1cm] (circle3) at (4,0) {3};

  
\end{tikzpicture}
    \caption{When $t$ is odd, $\E(t)$ is empty}
    \label{fig:sub1}
  \end{subfigure}
  \begin{subfigure}[b]{0.45\textwidth}
    \centering
    \begin{tikzpicture}
  \node[circle, draw, minimum size=1cm] (circle1) at (0,0) {1};
  \node[circle, draw, minimum size=1cm] (circle2) at (2,1) {2};
  \node[circle, draw, minimum size=1cm] (circle3) at (4,0) {3};

  \draw[<->] (circle1) -- (circle2);
  \draw[<->] (circle2) -- (circle3);
  
\end{tikzpicture}
    \caption{When $t$ is even, $\E(t) = \{(1,2), (2, 1), (2, 3), (3, 2)\}$}
    \label{fig:sub2}
  \end{subfigure}
  \caption{\textbf{Illustration of an example message adversary.} Figure \ref{fig:sub1} shows that during odd rounds, the message adversary removes all the links, whereas Figure \ref{fig:sub2} shows that during even rounds, the adversary removes two links $(1,3)$ and $(3,1)$.} 
  \label{fig:dynaDegree}
\end{figure*}
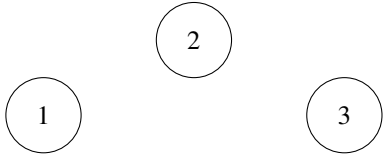
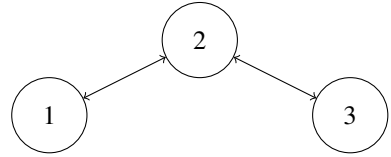

Formally, for node $i$, a port numbering is a bijection function $P_i : \{v \in V\} \rightarrow \{1, 2, \dots, n\}$. Recall that we assume each node knows $n$; hence, $P_i$ is well-defined. For two different nodes $i$ and $j$, $P_i$ may be different from $P_j$. Depending on the graph $G$ and the message adversary, nodes might never receive messages from a specific port. 

We assume that the underlying communication layer is authenticated in the sense that a Byzantine sender cannot tamper with the port numbering at fault-free nodes. As a result, Byzantine sender is \textit{not} able to send bogus messages on behalf of other nodes.

\subsection{A Stability Property: \ddegree{T,D}}

We characterize graphs using the following property:

\begin{definition}[\textbf{\ddegree{T,D}}]
\label{def:ddegree}
    \textit{A dynamic graph $G = (V, E)$ satisfies \ddegree{T,D} for a finite $T \geq 1$ and $n-1 \geq D \geq 1$ if for all $t \in \mathbb{N}$, in the \textit{static} graph $G_t := (V, \cup_{i = t}^{t+T-1} \E(t))$, the number of distinct incoming neighbors ``aggregated'' over any $T$-round interval at any fault-free node $i$ is at least $D$.}
\end{definition}

On a high-level, $G$ is the ``base'' communication graph that defines the capability for sending messages when all the links are reliable. In each round $t$, the message adversary chooses the set of reliable links, defined as $\E(t)$ -- links in $\E(t)$ deliver message reliably. Nodes do not know $G$, nor $\E$.

The property \ddegree{T,D} counts the number of distinct incoming neighbors $D$ over any $T$-round period. $G_t$ is a static graph whose set of links does not change over time. The set of links in $G_t$ is a union of $\E(t), \E(t+1), \cdots, \E(t+T-1)$. Any fault-free nodes in $G_t$ must have $\geq D$ distinct incoming neighbors. Nodes do \textit{not} know $T$ nor $D$. These parameters are only used for analysis. 

By definition, the incoming neighbors required by the $T$-interval dynamic degree does \textit{not} need to be fault-free. For example, the message adversary may choose links to deliver messages from Byzantine neighbors. 

\vspace{3pt}
\noindent\textbf{Example.}~~Consider Figure \ref{def:ddegree}. $G(V,E)$ has $V = \{1, 2, 3\}$ and $E = \{(1,2), (1, 3), (2, 1), (2, 3), (3, 1), (3, 2)\}$. 
The example satisfies \ddegree{2,1} because in any $2$-round interval, each node has at least 1 incoming neighbor. It does not satisfy \ddegree{1, 1}, because in odd rounds, the graph is disconnected.


\vspace{3pt}
\noindent\textbf{Comparison with Prior Stability Properties.}~~Prior papers identified several different stability properties on dynamic graphs for various distributed tasks. We compare \ddegree{T, D} with most relevant ones:

\begin{itemize}
    \item \textit{$T$-interval connectivity} \cite{Kuhn_dynamic_STOC10}: it requires that for every $T$ consecutive rounds, there exists a stable connected spanning subgraph. The set of dynamic links are assumed to be \textit{bi-directional}. In other words, the graph is strongly connected in every $T$-round internal.

    \item \textit{Rooted spanning tree} \cite{Nowak_Averaging_ICALP15,Nowak_tight_bound_asymptotic_JACM21,Kyrill_dynamicConsensus_ITCS23}: it requires that in every round $t$, the graph contains a \textit{directed} rooted spanning tree, i.e., there exists at least one ``coordinator'' that can reach every other node (potentially via a multi-hop route) in the graph for round $t$. 
\end{itemize}

Our condition is different from $T$-interval connectivity because of our assumption of directed links. \ddegree{T, D} is different from rooted spanning tree, because we allow rounds in which $G(V, \E(t))$ does not have a ``root.'' 

In order to solve consensus, the corresponding \ddegree{T, D} requires $G_t$ to have a root node over a $T$-round interval. Hence, it is tempting to view such an interval in our model as one ``mega-round'' in the prior model \cite{Nowak_Averaging_ICALP15,Nowak_tight_bound_asymptotic_JACM21,Kyrill_dynamicConsensus_ITCS23}. However, the ``dynamics'' of how nodes change their states is different in our model. Generally speaking, our model captures a more fine-grained interaction between nodes that have different ``views.'' This is because in a $T$-round interval, some nodes may update frequently while the other nodes only update once. Such a behavior is not captured when using the ``mega-round'' formulation. This perspective will become more clear when we discuss our algorithms.

\subsection{Exact Consensus and Approximate Consensus}

We study both exact consensus and approximate consensus problems \cite{welch_book,lamport_agreement,DolevLPSW86,Lynch_DistributedAlgorithms1996}. The former task requires that the fault-free nodes agree on exactly the same output, whereas the latter one only requires that the fault-free nodes to agree on roughly the same output. Formally, we have

\begin{definition}
    Binary exact consensus algorithms need to satisfy the following three conditions:
    \begin{itemize}
        \item[(i)] \textit{Termination}: Each fault-free node outputs a value;
        
        \item[(ii)] \textit{Validity}: The output of each fault-free  node equals to some binary input given to a non-Byzantine node; and 
        
        \item[(iii)] \textit{Agreement}: Fault-free nodes have an identical output.
    \end{itemize}
\end{definition}

\begin{definition}
    Approximate consensus algorithms need to satisfy the following three conditions:
    
        \begin{itemize}
        \item[(i)] \textit{Termination}: Each fault-free node outputs a value; 
        
        \item[(ii)] \textit{Validity}: The outputs of all fault-free nodes are within the convex hull of the non-Byzantine inputs; and 
        
        \item[(iii)] \textit{$\epsilon$-agreement}: The outputs of all fault-free nodes are within $\epsilon$ of each other's output.
    \end{itemize}

\end{definition}

Following prior work \cite{Tseng_PODC12,Nowak_tight_bound_asymptotic_JACM21,Lynch_DistributedAlgorithms1996}, we assume that the range of initial inputs of all nodes is bounded. Without loss of generality, we can scale the inputs to $[0,1]$ as long as they are bounded by scaling $\epsilon$ down by the same factor. Formally, for approximate consensus algorithms, we assume that every $i\in\calH$ has input $x_i\in [0,1]$. 



\subsection{Challenges in Anonymous Dynamic Networks}

Prior fault-tolerant consensus algorithms do not work in our model. There are mainly three categories of algorithms in synchronous or asynchronous message-passing networks: (i) algorithms that assume reliable message delivery, e.g., 
\cite{DolevLPSW86,LocalWait,Tseng_PODC12,AA_Fekete_aoptimal}; (ii) algorithms that relax termination (namely asymptotic consensus), e.g., 
\cite{Nowak_Averaging_ICALP15,Nowak_tight_bound_asymptotic_JACM21,Sundaram_ACC,Sundaram_journal}; and (iii) algorithms that piggyback the entire history (namely the full-information model) \cite{Kyrill_dynamicConsensus_ITCS23}. There are also some algorithms \cite{Kuhn_dynamic_SIGACTNews11,Kunh_dynamicConsensus_PODC11} that do not tolerate node faults. 

The analysis and design of these prior algorithms do not directly apply to our model because of three main challenges: (i) we cannot implement some well-known primitives such as reliable broadcast \cite{OR_Dolev_2004,bracha1987asynchronous}, because nodes are anonymous, and $T$ and $D$ are unknown; (ii) nodes are not able to piggyback extra information because of the anonymity and limited bandwidth assumption; and (iii) due to anonymity, it is not easy to break a tie.  


\section{Related Work}
\label{s:related}

Distributed tasks with fault-tolerance (of both node and link faults) have been widely studied \cite{welch_book,Lynch_DistributedAlgorithms1996}. We focus on the closely related work on dynamic networks in this section. Early works \cite{Gafni_Dynamic_FOCS87,Dynamic_STOC92,Welch_DynamicLeader_IPDPS09,Welch_Mutex_Adhoc01} focus on networks that eventually stop changing. These algorithms usually guarantee progress or liveness only after the network stabilizes (i.e., when network stops changing). 

Subsequently, various works investigate networks with continual dynamic changes. Kuhn et al. propose the idea of $T$-interval connectivity and study problems on election, counting, consensus, token dissemination, and clock synchronization \cite{Kuhn_DynamicClock_SPAA09,Kuhn_dynamic_STOC10,Kuhn_dynamic_SIGACTNews11,Kunh_dynamicConsensus_PODC11}. The links in their model are assumed to be bi-directional and nodes are assumed to be correct. Bolomi et al. \cite{Bonomi_DynamicRB_SSS18} investigate reliable broadcast primitive in dynamic networks with locally bounded Byzantine adversary. The nodes are assumed to have an identity. 

Di Luna et al. have a series of works on anonymous dynamic networks which investigate problems like counting, leader election, and arbitrary function computation \cite{Bonomi_DynamicAnonymousCounting_ICDCN14,Bonomi_DynamicAnonymousCounting_ICDCS14,DiLuna_DynamicAnonymous_DISC23,DiLuna_DynamicAnonymous_PODC23,DiLuna_DynamicAnonymous_PODC15,DiLuna_DynamicAnonymous_OPODIS15}. There is also a line of works on investigating message adversary for consensus in directed dynamic graphs \cite{Kyrill_DynamicConsensus_DC19,Kyrill_DynamicConsensus_TCS18,Kyrill_dynamicConsensus_ITCS23,Nowak_Averaging_ICALP15,Nowak_tight_bound_asymptotic_JACM21}. 
These work do not assume node faults and link bandwidth is assumed to be unlimited. Therefore, their techniques are quite different from ours.

\begin{algorithm*}[t]
\caption{DAC: Steps at each node $i$ in round $t$. Node $i$ outputs $v_i$ when $p_i = p_{end}$ (identified in (\ref{eq:p-end-crash}))}
\label{alg:DAC}
\begin{algorithmic}[1]
\footnotesize
\item[{\bf Initialization:}]{}
    
    \item[] $v_i, v_{\min,i}, v_{\max,i}\gets x_i$ \Comment{$x_i$ is the input}
    
    \item[] $p_i\gets 0$ \Comment{phase index}
    
    \item[] $R_i\gets$ zero vector of length $n$
    
    \item[] $R_i[i] \gets 1$

    \item[] \hrulefill
    \vspace{-15pt}
    \begin{multicols}{2}    
    
    \For{$t \gets 0$ to $\infty$}
        \State \textbf{broadcast} $\langle i,v_i, p_i \rangle$ to all 
        \State $M_i \gets$ messages received  in round $r$
        \For{each $\langle j,v_j, p_j \rangle$ received from \textbf{port} $j$ in $M_i$}\label{line:ACR-start}
            \If {$p_j > p_i$} \label{line:ACR-jump1}
                \State $v_i\gets v_j$
                \State $p_i\gets p_j$ \label{line:ACR-jump}
                \State \textsc{Reset}() \label{line:ACR-jump2}
            \ElsIf {$p_j = p_i$ and $R_i[j]=0$}\label{line:ACR-phase-p}
                \State $R_i[j]\gets 1$
                \State $\textsc{Store}(v_j)$ \label{line:ACR-add-state}
            
                \If{$|R_i| \geq \lfloor \frac{n}{2} \rfloor + 1$} 
                \label{line:ACR-avg1}
                    \State $v_i\gets \frac{1}{2}(v_{\min,i}+v_{\max,i})$
                    \State $p_i\gets p_i+1$
                    \State \textsc{Reset}() \label{line:ACR-end}
                \EndIf
                
                \If{$p_i = p_{end}$}\Comment{$p_{end}$ in  (\ref{eq:p-end-crash})}
                    \State output $v_i$
                \EndIf
            \EndIf 
        \EndFor
    \EndFor    

    \vspace{5pt}
    \Function{Reset}{\null}
        \State $R_i[i]\gets 1$; ~~and $R_i[j]\gets 0,\forall j\neq i$
        \State $v_{\min,i}, v_{\max,i}\gets v_i$ 
    \EndFunction
    
    \vspace{5pt}
    \Function{Store}{$v_j$}
        \If {$v_j < v_{\min,i}$}
            \State $v_{\min,i} \gets v_j$
        \ElsIf {$v_j > v_{\max,i}$}
            \State $v_{\max,i} \gets v_j$
        \EndIf
    \EndFunction
\end{multicols}   
\vspace{-15pt}
\end{algorithmic}
\end{algorithm*}

\section{Crash-Tolerant Approximate Consensus}
\label{sec:crash}

This section considers the case when nodes may crash. 
We present a simple algorithm, \textit{DAC}, which stands for Dynamic Approximate Consensus and achieves approximate consensus if $n\geq 2f+1$ and the graph satisfies \ddegree{T, \lfloor n/2 \rfloor}. Section \ref{s:impossible} will prove that these conditions together are necessary. DAC shows that they are sufficient. 

DAC is inspired by prior algorithms \cite{DolevLPSW86,Tseng_PODC12,Lynch_DistributedAlgorithms1996} in which nodes proceed in phases and nodes collect messages from a certain phase to update its local state value and proceed to the next phase. DAC has two key changes: 

\begin{itemize}
    \item The capability of ``jumping'' to a future phase when receiving a state with a larger phase index; and

    \item Each node uses a bit vector $R_i$ to keep track of received messages from the same phase. This is possible because of our assumption of port numbering (cf. Section \ref{s:preliminaries}). 
\end{itemize}
Change (i) avoids sending repetitive state values (state values from prior phases) to handle message loss under limited bandwidth, while preserving the same convergence rate. Change (ii) allows node $i$ to know when it is safe to proceed to the next phase. DAC is presented in Algorithm \ref{alg:DAC}. 

Each node $i$ initializes its local state $v_i$ to a given input $x_i$, and then broadcasts its state $v_i$ in every round. It only stores two states -- $v_{min, i}$ and $v_{max, i}$ -- the smallest and largest phase-$p$ states observed so far, respectively. Each node $i$ in phase $p_i$ has two ways to proceed to a higher phase: 
\begin{enumerate}
    \item (line \ref{line:ACR-jump1} -- \ref{line:ACR-jump2}): upon receiving a message from phase $q > p_i$, node $i$ directly copies the received state and ``jumps'' to phase $q$; or
    
    \item (line \ref{line:ACR-avg1} -- \ref{line:ACR-end}): upon receiving $ \lfloor \frac{n}{2} \rfloor + 1$ phase-$p_i$ states from different nodes, node $i$ updates its state and proceeds to phase $p+1$. 
\end{enumerate}
Since a node might not receive enough phase-$p_i$ messages in a round, it uses an $n$-bit bit vector to keep track of the senders. Recall that even though our model does \textit{not} assume node identity, each node can still use \underline{receiving ports} to distinguish messages received. For brevity, denote the number of ones in vector $R_i$ by $|R_i|$.\footnote{One practical optimization is to use an $(n-1)$-bit bit vector. In our design, $R_i[i]$ is always $1$, since  each node always has its own state value.} Every fault-free node repeats this process until it proceeds to the termination phase $p_{end}$, which will be defined later in Equation (\ref{eq:p-end-crash}). 

\subsection{Correctness of DAC}
\noindent\textbf{Technical Challenge.}~~Intuitively, the algorithm is correct, because a ``future'' state value is better, in terms of convergence, than the current state value (i.e., the $v_i$'s from the current phase $p$). In other words, ``jump'' should not affect correctness. However, because nodes do not use state values from the same phase to update their state values, we cannot directly apply prior proofs (e.g., \cite{Lynch_DistributedAlgorithms1996,LocalWait,OR_Dolev_2004,Tseng_PODC12,DolevLPSW86}), which rely on the fact that a pair of nodes receive at least \textit{one common value} for each phase (via a typical quorum intersection argument). 

In DAC, nodes may move or jump to a phase, because they do not use the same updating rule. The key challenge is to devise the right setup so that we can use induction to derive a bound on the convergence rate. Our induction-based proof is useful for handling the case when nodes may not receive common values (due to message loss). 
Intuitively, we need to find a way to replace a common value by a common interval (that contain state values across different phases) for proving convergence. 

Algorithm DAC satisfies termination because after $T$ rounds, each node must receive either $\lfloor n/2 \rfloor + 1$ messages in the same phase (including message received from itself) or a message with a higher phase owing to \ddegree{T, \lfloor n/2 \rfloor}. Validity is also straightforward because of our updating rules and the assumption of non-Byzantine behavior. Hence, we focus on the $\epsilon$-agreement below.

\vspace{3pt}
\noindent\textbf{Notations.}~~
We introduce two useful notations.

\begin{definition}
    Let $S$ be a finite multiset. Define the \textit{cardinality} $|S|$ as number of elements in $S$ counting multiplicity, the \textit{range} of $S$ as $\range(S)=\max(S)-\min(S)$, and the \textit{interval} of $S$ as $\interval(S)=[\min(S),\max(S)]$. 
\end{definition}

\begin{definition}
    \label{def:ACR-Vp}
    Define $V^{(p)}$ as a multiset of phase-$p$ states of all nodes that have not crashed yet. 
\end{definition}
For a faulty node which crashed before phase $p$, its phase-$p$ state is empty and hence is excluded in $V^{(p)}$. Due to the ``jump'' feature of DAC, some fault-free nodes may skip a particular phase $p$. To simplify our analysis, we introduce the following definition:

\begin{definition}
    \label{def:ACR-vp-skip}
    If a node $i$ jumps from some phase $p$ to phase $q>p$, then we define its phase-$p'$ state value $v_i^{p'}$ of skipped phases ($p < p' <q$) as $v_i^q$.
\end{definition}
Denote $n_p=|V^{(p)}|$. Definition \ref{def:ACR-vp-skip} and the assumption that $n \ge 2f+1$  imply that $n_p \geq n-f \ge \lfloor\frac{n}{2}\rfloor + 1$ for all phase $p$. 
Without loss of generality, we order $V^{(p)}$ \textit{chronologically} such that the skipped state values are ordered last (breaking tie arbitrarily). In other words, $V^{(p)}=\{v_1^p,\ldots,v_{n_p}^p\}$, where $v_k^p$ is the phase-$p$ state of the $k$-th node that proceeded to phase $p$. For nodes that skip state $p$, their values appear last in $V^{(p)}$. 

With $V^{(p)}$ defined, we can introduce the notion of convergence rate.

\begin{definition}[Convergence Rate]
\label{def:convergence-rate}
    Consider an algorithm $\calA$ in which each node $i$ maintains a state value $v_i$. 
    Then we say $\calA$ has convergence rate $\rho$, for some $\rho\in[0,1]$, if 
    $\range(V^{(p+1)}) \le \rho\cdot \range(V^{(p)})$.
    \qinzi{}
\end{definition}

Finally, we need two more definitions to help our proof. Define $V^{(p)}_k=\{v_1^p,\ldots,v_k^p\}$, i.e., the first $k$ elements in $V^{(p)}$. 
Define $W^{(p)}$ as $V^{(p)}$ sorted in ascending order of values, i.e., $W^{(p)}=\{w_1^p,\ldots, w_{n_p}^p\}$ such that $w_1^p\leq \ldots \leq w_{n_p}^p$. Note that $W^{(p)}$ is defined with respect to the values, instead of chronological order. 

\vspace{3pt}
\noindent\textbf{Convergence Proof.}~~For $\epsilon$-agreement, we first prove the following key lemma to identify the convergence rate. Roughly speaking, the convergence rate identifies the ratio that the range of fault-free nodes decreases in each phase. The base case is similar to the ``common value'' technique in \cite{LocalWait,AA_Dolev_1986,OR_Dolev_2004}. The difference lies in the inductive step where we need to consider nodes that skip phases due to the ``jump'' updating rule. Later in Section \ref{sec:byz}, we generalize the technique to handle Byzantine nodes, where the proof naturally becomes more complicated.

\begin{lemma}
For each $p~~(0 \leq p \leq p_{end}-1)$ and $k\in[n_{p+1}]$,
\begin{equation}
    \label{eq:SmallAC-lem}
    V_k^{(p+1)} \subseteq \left[\frac{w_1^p+w_{\lfloor\frac{n}{2}\rfloor+1}^p}{2}, \frac{w_{n_p-\lfloor\frac{n}{2}\rfloor}^p+w_{n_p}^p}{2} \right].
\end{equation}
\end{lemma}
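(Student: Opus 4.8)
The plan is to prove the inclusion in (\ref{eq:SmallAC-lem}) by induction on $k$, the chronological rank of the node within phase $p+1$. The key conceptual move is to bound any phase-$(p+1)$ value from both sides using the \emph{sorted} phase-$p$ values $w_1^p \le \cdots \le w_{n_p}^p$. I would separate the analysis into two cases according to how a node arrives at phase $p+1$: either it \emph{averages} (line \ref{line:ACR-avg1}--\ref{line:ACR-end}, having collected $\lfloor n/2\rfloor+1$ distinct phase-$p$ states) or it \emph{jumps} (line \ref{line:ACR-jump1}--\ref{line:ACR-jump2}, copying a value from a node already at phase $p+1$ or higher). The averaging case is the workhorse; the jump case should reduce to the averaging case by the chronological ordering.

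For the averaging case, suppose the $k$-th node to enter phase $p+1$ did so by averaging. Its new value is $\tfrac12(v_{\min,i}+v_{\max,i})$, where $v_{\min,i}$ and $v_{\max,i}$ are the smallest and largest among the $\lfloor n/2\rfloor+1$ distinct phase-$p$ values it stored (including its own). First I would observe that since these are $\lfloor n/2\rfloor+1$ values drawn from the $n_p$-element multiset $V^{(p)}$, a counting/pigeonhole argument on the sorted list $W^{(p)}$ forces $v_{\min,i} \le w_{\lfloor n/2\rfloor+1}^p$ (any set of $\lfloor n/2\rfloor+1$ values must contain an element at or below the $(\lfloor n/2\rfloor+1)$-th smallest) and symmetrically $v_{\max,i} \ge w_{n_p-\lfloor n/2\rfloor}^p$. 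Combined with the trivial bounds $v_{\min,i}\ge w_1^p$ and $v_{\max,i}\le w_{n_p}^p$, averaging yields
\begin{equation*}
\frac{w_1^p+w_{\lfloor n/2\rfloor+1}^p}{2} \le \frac{v_{\min,i}+v_{\max,i}}{2} \le \frac{w_{n_p-\lfloor n/2\rfloor}^p+w_{n_p}^p}{2},
\end{equation*}
which is exactly the desired interval. Note this bound does not depend on $k$, so it simultaneously handles the base case $k=1$ and every averaging node.

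For the jump case, the $k$-th node copies the value of some node $j$ that had already reached phase $q \ge p+1$. If $q=p+1$, then $j$ entered phase $p+1$ strictly earlier, so $j$ has chronological rank $k' < k$, and by the induction hypothesis $v_j^{p+1}$ already lies in the target interval; the copied value inherits the same containment. If $q>p+1$, I would appeal to Definition \ref{def:ACR-vp-skip}, under which the skipped phase-$(p+1)$ value is \emph{defined} to equal the later value $v_i^q$, and the chronological ordering convention that places skipped values last in $V^{(p+1)}$; I would then argue that the relevant earlier value still falls in the interval, tracing back through the chain of jumps to an averaging step. The main obstacle I anticipate is precisely this bookkeeping for multi-phase jumps: ensuring that the chronological ordering (with skipped values ordered last) is consistent with the induction so that every value a jumping node could copy is already known to lie in the interval. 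The pigeonhole step itself is routine once the sorted indexing is set up correctly; the delicate part is verifying that the inductive structure is well-founded across the jump feature, which is exactly where the authors flag the proof as departing from the classical common-value arguments.
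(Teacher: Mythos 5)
Your strategy---induction on the chronological rank $k$, a pigeonhole bound on the sorted phase-$p$ values for the averaging case, and reduction to an earlier-ranked value for the jump case---is exactly the structure of the paper's proof (your averaging bound is the paper's Claim~\ref{claim:AC-trim}, stated $k$-independently for the same reason). However, there is a concrete error in the central pigeonhole step: your two inequalities are attached to the wrong extremes, and as written they do not imply the displayed conclusion. To get the lower bound $\frac{v_{\min,i}+v_{\max,i}}{2}\ge\frac{w_1^p+w_{\lfloor n/2\rfloor+1}^p}{2}$ you must pair the trivial $v_{\min,i}\ge w_1^p$ with $v_{\max,i}\ge w_{\lfloor n/2\rfloor+1}^p$ (the \emph{maximum} of $\lfloor n/2\rfloor+1$ distinct entries of $W^{(p)}$ sits at sorted position at least $\lfloor n/2\rfloor+1$); to get the upper bound you must pair $v_{\max,i}\le w_{n_p}^p$ with $v_{\min,i}\le w_{n_p-\lfloor n/2\rfloor}^p$ (the \emph{minimum} sits at position at most $n_p-\lfloor n/2\rfloor$). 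The facts you actually state, $v_{\min,i}\le w_{\lfloor n/2\rfloor+1}^p$ and $v_{\max,i}\ge w_{n_p-\lfloor n/2\rfloor}^p$, are true but strictly weaker (since $n_p\le n$ forces $w_{n_p-\lfloor n/2\rfloor}^p\le w_{\lfloor n/2\rfloor+1}^p$), and combined with the trivial bounds they only yield the larger interval $\bigl[\frac{w_1^p+w_{n_p-\lfloor n/2\rfloor}^p}{2},\frac{w_{\lfloor n/2\rfloor+1}^p+w_{n_p}^p}{2}\bigr]$. Swapping the indices repairs the step and recovers Claim~\ref{claim:AC-trim}. On the jump case, your instinct is right but you over-engineer it: the paper does not trace chains of jumps. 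Under Definition~\ref{def:ACR-vp-skip} and the convention that skipped values are ordered last in $V^{(p+1)}$, a node arriving at phase $p+1$ by jumping simply copies a state that is already accounted for among the earlier-ranked elements, so $\interval(V_{k+1}^{(p+1)})=\interval(V_k^{(p+1)})$ and the induction hypothesis closes the case in one line.
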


\begin{proof}
We prove the lemma by induction on $k$. First, we prove an important claim.

\begin{algorithm*}[t]
\caption{DBAC: Steps at each node $i$ in round $t$. Node $i$ outputs $v_i$ when $p_i = p_{end}$ (identified in (\ref{eq:p-end-byz}))}
\label{alg:BACR}
\begin{algorithmic}[1]
\footnotesize
\item[{\bf Initialization:}]{}
    
    \item[] $v_i \gets x_i$ \Comment{$x_i$ is the input}
    
    \item[] $p_i\gets 0$ \Comment{phase index}
    
    \item[] $R_i\gets$ zero vector of length $n$
    
    \item[] $R_i[i] \gets 1$ 
    
    \item[] $R_{i,\low}, R_{i,\high}\gets \{\}$
    
    \item[] \hrulefill
    \vspace{-15pt}
    \begin{multicols}{2} 
    \For{$t \gets 0$ to $\infty$}
    \State broadcast $\langle i,v_i,p_i \rangle$ to all
    
    \State $M_i \gets$ messages received in round $r$
    
    \For{each $\langle j,v_j,p_j\rangle$ from port $j$ in $M_i$} \label{line:BACR-atomic1}
        \If {$p_j \geq p_i$ and $R_i[j]=0$}
            \State $R_i[j]\gets 1$
            \State $\textsc{Store}(v_j)$
        \EndIf 
        \If {$|R_i|\geq  \lfloor\frac{n+3f}{2}\rfloor + 1$} \label{line:BACR-receive}
            \State $v_i\gets \frac{1}{2}(\max(R_{i,\low})+\min(R_{i,\high}))$
            \State $p_i\gets p_i+1$
            \State \textsc{Reset}()
            \label{line:BACR-atomic2}
        \EndIf
    \EndFor
    
        \If{$p_i = p_{end}$}
            \State output $v_i$
        \EndIf
    \EndFor

    \vspace{5pt}
    \Function{Reset}{\null}
        \State 
        $R_i[j]\gets0, \forall j\neq i$ 
        \State $R_{i,\low},R_{i,\high}\gets \{\}$
    \EndFunction
    
    \vspace{5pt}
    \Function{Store}{$v_j$}
        \If {$|R_{i,\low}|\leq f+1$}
            \State $R_{i,\low}\gets R_{i,\low}\cup \{v_j\}$
        \ElsIf {$v_j < \max(R_{i,\low})$}
            \State replace max value in $R_{i,\low}$ with $v_j$
        \EndIf
        
        \If {$|R_{i,\high}|\leq f+1$}
            \State $R_{i,\high}\gets R_{i,\high}\cup \{v_j\}$
        \ElsIf {$v_j>\min(R_{i,\high})$}
            \State replace min value in $R_{i,\high}$ with $v_j$
        \EndIf 
    \EndFunction
\end{multicols}
\vspace{-15pt}
\end{algorithmic}
\end{algorithm*}

\begin{claim}
\label{claim:AC-trim}
For every $p\geq 0$, if a node in phase $p$ updates to phase $p+1$ by receiving $\lfloor n/2 \rfloor + 1$ phase-$p$ states, then its new state value $v$ in phase $p+1$ satisfies
\begin{equation*}
    v\in \left[\frac{w_1^p+w_{\lfloor n/2 \rfloor+1}^p}{2}, \frac{w_{n_p-\lfloor n/2 \rfloor }^p+w_{n_p}^p}{2} \right].
\end{equation*}
\end{claim}

\begin{proof}[Proof of Claim \ref{claim:AC-trim}]
In each phase, the state value $v_i$ of each node remains unchanged until the node updates to the next phase. 
Moreover, line \ref{line:ACR-phase-p} ensures that each state is received at most once by a receiver in each phase. Therefore, if some node in phase $p$ receives $\lfloor n/2 \rfloor + 1$ phase-$p$ states (including from itself), then the smallest $\lfloor n/2 \rfloor + 1$ possible states it can receive are $w_1^p,\ldots, w_{\lfloor n/2 \rfloor + 1}^p$. Similarly, the maximum possible states are $w_{n_p}^p,\ldots,w_{n_p-\lfloor n/2 \rfloor}^p$. In conclusion, the new state in phase $p+1$ falls in the interval in Claim \ref{claim:AC-trim}.
\end{proof}

Claim \ref{claim:AC-trim} proves the base case that Equation (\ref{eq:SmallAC-lem}) holds for fixed $k=1$ and for every $p\geq 0$ because $V_1^{(p+1)}$ only consists of the state value of one node, which must update to phase $p+1$ by receiving $\lfloor n/2 \rfloor + 1$ phase-$p$ states.

In the induction case, assume Equation (\ref{eq:SmallAC-lem}) holds for every $p\geq 0$ and for some $k$, and we want to prove Equation (\ref{eq:SmallAC-lem}) for every $p\geq 0$ and for $k+1$. Note that the new node proceeds to phase $p+1$ by either receiving $\lfloor n/2 \rfloor + 1$ phase-$p$ state values (including one from itself) or copying a future state. In the former case, Claim \ref{claim:AC-trim} implies the induction statement, whereas in the latter case, the range of $V_{k+1}^{(p+1)}$ is unchanged and therefore Equation (\ref{eq:SmallAC-lem}) again holds for $k+1$.
\end{proof}

\begin{remark}
By definition, $n \geq n_p$. This implies that $n - \lfloor n/2 \rfloor \geq n_p - \lfloor n/2 \rfloor$, which leads to $\lfloor n/2 \rfloor+1 \geq n_p - \lfloor n/2 \rfloor$. Since $w^p$'s are ordered in the  ascending order, we have $ w_{\lfloor n/2 \rfloor + 1}^p\geq w_{{n_p}-\lfloor n/2 \rfloor}^p$, and thus
\begin{align*}
    \range(V^{(p+1)})
    &\leq \frac{w_{n_p-\lfloor n/2 \rfloor}^p+w_{n_p}^p}{2} - \frac{w_1^p+w_{\lfloor n/2 \rfloor + 1}^p}{2} \\
    &\leq \frac{w_{n_p}^p-w_1^p}{2} = \frac{1}{2}\cdot \range(V^{(p)}).
\end{align*}
In other words, Algorithm \ref{alg:DAC} converges with rate $\frac{1}{2}$. 
\end{remark}

This implies the following theorem, which identifies the phase $p_{end}$ in which node $i$ is able to output $v_i$. 

\begin{theorem}
\label{thm:DAC-converge}
    Algorithm DAC satisfies $\epsilon$-agreement after phase $p_{\mathrm{end}}$, where
    \begin{equation}
        \label{eq:p-end-crash}
        p_{\mathrm{end}} = \log_{\frac{1}{2}}(\epsilon) 
    \end{equation}
\end{theorem}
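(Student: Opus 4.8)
The plan is to lift the single-phase contraction from the Remark to an absolute bound on the range of the state values, and then solve for the phase at which that bound falls to $\epsilon$. First I would iterate the convergence rate: the Remark establishes $\range(V^{(p+1)}) \le \tfrac{1}{2}\range(V^{(p)})$ for every $p$, so a trivial induction on $p$ gives $\range(V^{(p)}) \le \bigl(\tfrac{1}{2}\bigr)^{p}\range(V^{(0)})$. Because every fault-free input lies in $[0,1]$ and $V^{(0)}$ is exactly the multiset of phase-$0$ states (the inputs of the non-crashed nodes), we have $\range(V^{(0)}) \le 1$, and hence $\range(V^{(p)}) \le \bigl(\tfrac{1}{2}\bigr)^{p}$ for all $p$.

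Solving $\bigl(\tfrac{1}{2}\bigr)^{p_{\mathrm{end}}} \le \epsilon$ then yields the stated $p_{\mathrm{end}} = \log_{\frac{1}{2}}(\epsilon)$ (rounded up to an integer so that it is a valid phase index), at which point $\range(V^{(p_{\mathrm{end}})}) \le \epsilon$. It remains to convert this into $\epsilon$-agreement among the \emph{true} outputs, which is delicate because the ``jump'' rule lets distinct fault-free nodes terminate at distinct phases. For this I would first observe that the interval family is nested: by the Lemma, every phase-$(p+1)$ state lies in an interval whose two endpoints are averages of values drawn from $V^{(p)}$, so both endpoints lie between $\min(V^{(p)})$ and $\max(V^{(p)})$, giving $\interval(V^{(p+1)}) \subseteq \interval(V^{(p)})$ and hence $\interval(V^{(q)}) \subseteq \interval(V^{(p_{\mathrm{end}})})$ for every $q \ge p_{\mathrm{end}}$. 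Each fault-free node's output equals its state at the first phase $q \ge p_{\mathrm{end}}$ it attains, an element of $V^{(q)}$ (using Definition \ref{def:ACR-vp-skip} to supply a state for any phase it skips over), so every fault-free output lies in the single interval $\interval(V^{(p_{\mathrm{end}})})$ of width at most $\epsilon$. Consequently any two fault-free outputs differ by at most $\epsilon$, which is exactly $\epsilon$-agreement.

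I expect the main obstacle to be precisely this reconciliation of asynchronously advancing phases: the contraction bound is proved for the phase-indexed multiset $V^{(p)}$, whereas nodes actually output at possibly different phases because of jumps, so it is the nesting property $\interval(V^{(p+1)}) \subseteq \interval(V^{(p)})$ that bridges the per-phase contraction and a uniform guarantee on the real outputs. The geometric-series part is routine; the care lies entirely in ensuring the outputs are all drawn from one common, sufficiently narrow interval.
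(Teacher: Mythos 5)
Your proposal is correct and follows essentially the same route as the paper: iterate the rate-$\tfrac{1}{2}$ contraction from the Remark, use $\range(V^{(0)})\le 1$ from the normalized inputs, and solve $(\tfrac{1}{2})^{p}\le\epsilon$ for $p_{\mathrm{end}}$. The interval-nesting argument you add to reconcile nodes that jump past $p_{\mathrm{end}}$ is a detail the paper leaves implicit, and it is a sound way to close that gap.
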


Interestingly, the lower bound from \cite{Nowak_tight_bound_asymptotic_JACM21} shows that $1/2$ is the optimal rate for any fault-tolerant approximate consensus algorithms, even in static graphs. Hence, DAC achieves the optimal convergence rate and optimal resilience even in the static graph with only node crash faults.

\section{Byzantine Approximate Consensus}
\label{sec:byz}

This section considers up to $f$ Byzantine nodes (denoted as set $\calB$), and the rest of the nodes (denoted as set $\calH$) are fault-free and always follow the algorithm specification. We present an approximate consensus algorithm, \textit{DBAC}, which stands for Dynamic Byzantine Approximate Consensus and is correct if $n\geq 5f+1$ and $G(V,E)$ satisfies \ddegree{T, \lfloor (n+3f)/2 \rfloor}. 

Algorithm DBAC and Algorithm DAC share a similar structure, but DBAC has different update rules to cope with Byzantine faults. Plus, nodes do \textit{not} skip phases in DBAC. The pseudo-code is presented in Algorithm \ref{alg:BACR}.

Each node starts with phase $0$ and initializes its local state value $v_i$ to the given input $x_i$. Then it broadcasts its current local state value in every round. For each node in phase $p_i$, upon receiving $\lfloor\frac{n+3f}{2}\rfloor + 1$ state values \underline{from phase $p_i$ or higher}, it updates its local state value $v_i$ to the average of the ($f+1$)-st lowest state value and the ($f+1$)-st highest state value that have been received so far and then proceeds to phase $p+1$. To achieve the goal, node $i$ uses $R_{i,\low}$ and $R_{i,\high}$ -- lists that store the $f+1$ lowest and $f+1$ highest received states in phase $p$ or higher, respectively. Recall that $|R_i|$ denotes the number of ones in $R_i$, whereas $|R_{i,\low}|$ and $|R_{i,\high}|$ denote the cardinality (i.e., the number of elements) of $R_{i,\low}$ and $R_{i,\high}$, respectively.

Our update rule ensures that the new state value falls in the range of fault-free state values despite of the existence of Byzantine messages. Moreover, since at most $f$ nodes are Byzantine faulty and the graph is assumed to satisfy \ddegree{T, \lfloor (n+3f)/2 \rfloor}, this step is always non-blocking. (Recall that a node can receive a message from itself as well.) Every node repeats this process until phase $p_{end}$, whose value will be determined later in Equation (\ref{eq:p-end-byz}).

DBAC is inspired by the iterative Byzantine approximate consensus algorithm (BAC) \cite{AA_Dolev_1986}, which update states using states from the same phase. BAC relies on reliable channels; hence, is not feasible in our dynamic network model. DBAC can update states using messages from different phases (as shown in the $\textsc{Store}(-)$ function below). These differences allow us to tolerate the nature of dynamic network; however, using messages from different phases make the correctness proof more complicated than prior analysis.

There exists a Byzantine approximate consensus algorithm \cite{OR_Dolev_2004} that achieves an optimal resilience $n \geq 3f+1$ in a static graph with only Byzantine nodes; however, it uses a stronger primitive, reliable broadcast \cite{bracha1987asynchronous}, and a technique of witness (of certain state values). Because of the anonymity assumption, such techniques are not possible in our model.

\subsection{Correctness of DBAC}

\begin{theorem}
\label{thm:smallbac-termination}
    Algorithm DBAC satisfies termination.
\end{theorem}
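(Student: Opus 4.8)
The plan is to show that every fault-free node advances through the phases at a bounded rate, so that each node reaches $p_{\mathrm{end}}$ in finite time and outputs its value. Termination hinges on showing that the update step (line \ref{line:BACR-receive}) is \emph{non-blocking}: in every window of $T$ consecutive rounds, each fault-free node accumulates enough distinct phase-$p_i$-or-higher messages to trigger the update $|R_i| \geq \lfloor (n+3f)/2 \rfloor + 1$. First I would fix a fault-free node $i$ currently in phase $p_i$ at the start of some round $t$, and track the quantity $|R_i|$, the number of distinct ports from which $i$ has recorded a qualifying message (i.e.\ a message with phase $\geq p_i$). Note $R_i[i]=1$ always holds, since $i$ receives its own broadcast reliably over the self-loop, so the count starts at $1$.

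The key step is the counting argument over a $T$-round interval using \ddegree{T, \lfloor (n+3f)/2 \rfloor}. By Definition \ref{def:ddegree}, over the interval $[t, t+T-1]$ node $i$ has incoming links from at least $\lfloor (n+3f)/2 \rfloor$ \emph{distinct} neighbors in the aggregated graph $G_t$. Each such neighbor $j$ broadcasts $\langle j, v_j, p_j \rangle$ in the round the link is active; I would argue that any message $i$ receives during this interval carries a phase index $\geq p_i$, because each node's phase index is monotonically non-decreasing over time and $i$ has not yet advanced past $p_i$ (if it had, termination for that phase is already achieved and we restart the argument at the higher phase). Hence every one of these $\geq \lfloor (n+3f)/2 \rfloor$ distinct neighbors contributes a message satisfying the guard $p_j \geq p_i$, so each sets a fresh bit in $R_i$. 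Together with $R_i[i]=1$, this gives $|R_i| \geq \lfloor (n+3f)/2 \rfloor + 1$ at some point within the interval, firing the update and pushing $i$ into phase $p_i+1$. Therefore each fault-free node advances at least one phase every $T$ rounds, so it reaches $p_{\mathrm{end}}$ within $T \cdot p_{\mathrm{end}}$ rounds and outputs.

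The main obstacle I anticipate is handling the bookkeeping of the $R_i$ bit vector correctly across a phase boundary within the interval. Because \textsc{Reset}() clears the bits $R_i[j]$ for $j \neq i$ when $i$ advances, a neighbor counted toward phase $p_i$ does not automatically carry over to phase $p_i+1$; but this only helps termination, since advancing is exactly the desired outcome. The subtler point is that the $\lfloor (n+3f)/2 \rfloor$ distinct neighbors guaranteed by \ddegree{} may send messages spread across different rounds of the interval, and a neighbor could in principle be a Byzantine node sending a message with a \emph{lower} phase index (which the guard $p_j \geq p_i$ would reject). I would address this by observing that the stability property counts incoming \emph{links}, not cooperative behavior, and that a Byzantine neighbor sending a phase-$p_j \geq p_i$ value still sets the bit $R_i[j]$ regardless of the value's correctness; the only way a neighbor fails to contribute is by sending $p_j < p_i$, but since all fault-free senders have phase $\geq$ their own monotone index and the adversary cannot forge ports, the worst case is that the Byzantine contributions are simply redundant rather than harmful to the count. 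If a careful accounting of Byzantine-versus-fault-free senders is needed to guarantee the threshold is met purely from qualifying messages, I would fall back on the fact that fault-free nodes alone, being at least $n - f \geq \lfloor (n+3f)/2 \rfloor + 1$ in number under $n \geq 5f+1$, eventually broadcast qualifying phase indices, closing the argument without relying on Byzantine cooperation.
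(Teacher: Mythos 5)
There is a genuine gap at the heart of your counting argument: you assert that any message node $i$ receives during the $T$-round window carries a phase index $\geq p_i$ ``because each node's phase index is monotonically non-decreasing.'' Monotonicity of a sender $j$'s own phase says nothing about how $j$'s phase compares to $i$'s. In DBAC nodes advance at different speeds, so a fault-free neighbor $j$ may still be lagging in some phase $p_j < p_i$ when the adversary schedules its link; the guard $p_j \ge p_i$ then rejects that message and no bit of $R_i$ is set, so the $\lfloor (n+3f)/2\rfloor$ delivered neighbors need not push $|R_i|$ over the threshold. Your fallback --- that the $n-f \ge \lfloor (n+3f)/2\rfloor + 1$ fault-free nodes ``eventually broadcast qualifying phase indices'' --- does not close the gap either, because the adversary is only obliged to deliver links from $\lfloor (n+3f)/2\rfloor$ distinct neighbors per window, not from all fault-free nodes, and nothing in your argument forces the neighbors it chooses to have caught up to phase $p_i$ by the time their links appear.

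The paper supplies exactly the missing ingredient: an induction on the phase number $p$ rather than a per-node, per-window argument. The induction hypothesis is that \emph{every} fault-free node reaches phase $p$ within finitely many rounds; only after the round in which the last fault-free node enters phase $p$ does one invoke \ddegree{T,\lfloor (n+3f)/2 \rfloor}, at which point every fault-free sender's broadcast genuinely carries phase $\ge p$, passes the guard at line \ref{line:BACR-receive}, and sets a fresh bit. Your remark about ``restarting the argument at the higher phase'' if $i$ has already advanced targets the wrong node: the synchronization that matters is across the \emph{other} nodes' phases, not $i$'s own. Restructure the proof as: base case, all nodes start in phase $0$; inductive step, once all fault-free nodes are in phase $\ge p$, each fault-free node still in phase $p$ advances within $T$ further rounds. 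With that wrapper, your window-counting step becomes essentially the paper's argument.
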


\begin{proof}
    We prove termination by induction on phase $p\geq 0$. Formally, we define the induction statement as: every fault-free node proceeds to phase $p$ for $1 \leq p \leq p_{end}$ within finite number of rounds. The base case holds because all nodes are initially in phase $0$. Now suppose all fault-free nodes proceed to phase $p$ within a finite number of rounds. Then after all fault-free nodes are in phase $p$ or higher, by assumption of \ddegree{T, \lfloor (n+3f)/2 \rfloor}, every fault-free node receive at least $\lfloor (n+3f)/2 \rfloor+1$ state values from fault-free nodes in phase $p$ or higher within $T$ rounds (including one from itself). Hence, every fault-free node proceeds to the next phase according to line \ref{line:BACR-receive} -- \ref{line:BACR-atomic2}, which proves the induction.    
\end{proof}

We define $V^{(p)}$ and $W^{(p)}$ in a similar way as we did in the previous section. The difference is that we only consider ``\textit{fault-free} nodes.'' Recall that in the case of crash faults, $V^{(p)}$ and $W^{(p)}$ might include nodes that crash later in phases after phase $p$. In the Byzantine case, we exclude any Byzantine state values, as the values are not well-defined. 

We then sort $V^{(p)}=\{v_1^p,\ldots,v_{|V^{(p)}|}^p\}$ chronologically, i.e., in the increasing order of round index in which the state value is calculated (breaking ties arbitrarily). Since we have already proved that DBAC terminates, $|V^{(p)}|=h$ for all $p\geq0$, where $h$ is the number of fault-free  nodes and $h=|\calH|\geq n-f$. For easiness of calculation, we also introduce the following notations:



\begin{definition}
Define $W^{(p)}=\{w_1^p,\ldots,w_h^p\}$ as $V^{(p)}$ ordered by values, i.e., $w_1^p\leq \ldots \leq w_h^p$.
\end{definition}


\begin{definition}
Define $U=\{u_1,\ldots,u_b\}$ as a multiset of \textit{arbitrary} values from Byzantine nodes, where $b$ is the number of Byzantine nodes in the execution and $b=|\calB|\leq f$.
\end{definition}


\begin{definition}
\label{def:k_t}
For round $t$ and phase $p$ ($0\le p\le p_{\mathrm{end}}$), define $k(t,p)$ as the number of fault-free nodes that are in phase $p$ or higher at the start of round $t$. 

Moreover, define $V_t^{(p)} = \{v_1^p,\ldots,v_{k(t,p)}^p\}$, i.e., the first $k(t,p)$ elements in the multiset $V^{(p)}$. If $k(t,p)=0$, we define $V_t^{(p)}=\emptyset$. In other words, $V_t^{(p)}$ is the multiset of phase-$p$ states of fault-free nodes whose phases are $\ge p$ at the start of round $t$.
\end{definition}

\begin{remark}
\label{rmk:bac}
Observe the properties below for $k(t,p)$:
\begin{enumerate}
    \item For fixed $p\geq 0$, $k(t,p)$ is non-decreasing with respect to $t$, i.e., $t \le t'$ implies 
    
    ~~~~~~~~~~~~$k(t,p)\le k(t',p)$ and thus $V_t^{(p)} \subseteq V_{t'}^{(p)}$.
    
    \item For fixed $t\ge 0$, $k(t,p)$ is non-increasing with respect to $p$, i.e., $p\le q$ implies 
    
    ~~~~~~~~~~~~$k(t,p) \geq k(t,q)$.\footnote{Although it still holds that $V_t^{(p)} \supseteq V_t^{(q)}$, the proof is not immediate. This identity turns out to be the key to the proof of Lemma \ref{lem:BACR-weak-validity}.}

    \item When $t=0$, $k(0,0)=h$ and $k(0,p)=0$ for all $p>0$ because all nodes are initially in phase $0$. Consequently, $V_0^{(0)} = V^{(0)}$ and $V_0^{(p)}=\emptyset$ for $p>0$.
    
    \item By termination, every fault-free node updates to phase $p_{\mathrm{end}}$ within finite time. Therefore, there exists a finite $t_{\mathrm{end}}$ that is the last round in which a fault-free node updates to $p_{\mathrm{end}}$. Moreover, $k(t_{\mathrm{end}}, p)=h$ and $V_{t_{\mathrm{end}}}^{(p)} = V^{(p)}$ for all $p$.
\end{enumerate}
\end{remark}

We are now ready to prove the key lemma that bounds the range of fault-free values. Recall that $\interval(V)=[\min(V),\max(V)]$ and $\range(V) = |\max(V)-\min(V)|$.


\begin{lemma} 
    \label{lem:BACR-weak-validity}
    For every round $t \ge 0$,
    \begin{equation}
    \label{eq:BAC-lem-validity}
       \interval(V_t^{(q)}) \subseteq \interval (V_t^{(p)}) ,\ \forall 0 \leq p \le q.
    \end{equation}
\end{lemma}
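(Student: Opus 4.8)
The plan is to prove Lemma~\ref{lem:BACR-weak-validity} by \emph{strong} induction on the round number $t$, reducing the interval containment to the two scalar inequalities $\max(V_t^{(q)}) \le \max(V_t^{(p)})$ and $\min(V_t^{(q)}) \ge \min(V_t^{(p)})$, with the convention $\interval(\emptyset)=\emptyset$ so that empty multisets are vacuously contained. The base case $t=0$ is immediate from the third property in Remark~\ref{rmk:bac}, since $V_0^{(0)}=V^{(0)}$ and $V_0^{(p)}=\emptyset$ for all $p>0$. The case $p=q$ is trivial throughout, so I will always assume $p<q$.

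The heart of the argument is a sub-claim: whenever a fault-free node $i$ performs its phase-$(q{-}1)\to q$ update during round $t$, the resulting value $v_i^q$ lies in $\interval(V_t^{(q-1)})$. The update rule (lines~\ref{line:BACR-receive}--\ref{line:BACR-atomic2}) sets $v_i^q=\tfrac12(\max(R_{i,\low})+\min(R_{i,\high}))$, where $\max(R_{i,\low})$ and $\min(R_{i,\high})$ are the $(f{+}1)$-st smallest and $(f{+}1)$-st largest of the $\ge \lfloor (n+3f)/2\rfloor+1$ values $i$ has received from phase $\ge q{-}1$. A standard order-statistic count --- at most $f$ of these values are Byzantine, while $n\ge 5f+1$ guarantees strictly more than $f$ fault-free values, so some fault-free value sits below the $(f{+}1)$-st smallest and some sits above the $(f{+}1)$-st largest --- shows that both $\max(R_{i,\low})$ and $\min(R_{i,\high})$, hence their average, lie between the minimum and maximum of the fault-free values $i$ received. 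Each such fault-free value was broadcast by some $j\in\calH$ at a round $\tau\le t$ while $j$ was in some phase $p_j\ge q{-}1$, so it belongs to $V_\tau^{(p_j)}$; the induction hypothesis at round $\tau$ gives $\interval(V_\tau^{(p_j)})\subseteq\interval(V_\tau^{(q-1)})\subseteq\interval(V_t^{(q-1)})$, where the last inclusion uses $V_\tau^{(q-1)}\subseteq V_t^{(q-1)}$ (the first property in Remark~\ref{rmk:bac}). Thus every fault-free value $i$ used lies in $\interval(V_t^{(q-1)})$, and therefore so does $v_i^q$.

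With the sub-claim in hand the inductive step closes by comparing $V_{t+1}^{(p)}$ with $V_t^{(p)}$: by the first property of Remark~\ref{rmk:bac} the former is obtained from the latter by adjoining the phase-$p$ values of nodes that reached phase $p$ during round $t$. Consider $\max(V_{t+1}^{(q)})$. If it is attained by an ``old'' element of $V_t^{(q)}$, then it equals $\max(V_t^{(q)})\le \max(V_t^{(p)})\le \max(V_{t+1}^{(p)})$ by the hypothesis and $V_t^{(p)}\subseteq V_{t+1}^{(p)}$. If it is attained by a ``new'' element $v_i^q$, the sub-claim gives $v_i^q\le \max(V_t^{(q-1)})\le \max(V_t^{(p)})\le \max(V_{t+1}^{(p)})$, where the middle inequality is the hypothesis applied with $p\le q{-}1$ (and $V_t^{(q-1)}\ne\emptyset$ because $i$ was in phase $q{-}1$ before round $t$). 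Either way $\max(V_{t+1}^{(q)})\le \max(V_{t+1}^{(p)})$, and the symmetric computation bounds the minima, completing the induction.

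The main obstacle is that a node's update is \emph{not} computed from a single round: the multisets $R_{i,\low},R_{i,\high}$ accumulate values over all the rounds $i$ spends in phase $q{-}1$, and those values arrive from senders in a \emph{mixture} of phases $\ge q{-}1$. This is exactly what makes the statement self-referential --- bounding $v_i^q$ by the phase-$(q{-}1)$ interval already presupposes that higher-phase values are nested inside lower-phase ones --- and it is why the induction must be strong, invoking the hypothesis at every earlier round $\tau\le t$ and every intermediate phase rather than stepping once. The remaining details to check are the empty-multiset conventions and the bookkeeping that a node entering phase $q$ in round $t$ carried out its $q{-}1\to q$ update using only values received by round $t$, so that the sub-claim applies.
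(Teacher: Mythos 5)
Your proof is correct and follows essentially the same route as the paper's: induction on the round number, sandwiching each newly computed state between two fault-free received values via the $f{+}1$ trimming (pigeonhole over at most $f$ Byzantine senders), placing those values in the lower-phase interval by the induction hypothesis, and closing with the monotonicity $V_t^{(p)}\subseteq V_{t+1}^{(p)}$. The only cosmetic difference is that you invoke the hypothesis at the earlier rounds $\tau\le t$ via strong induction, whereas the paper first lifts each received value into $V_t^{(q)}$ by monotonicity and applies ordinary induction at round $t$; both are sound.
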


In other words, Lemma \ref{lem:BACR-weak-validity} suggests that in every round, higher-phase states are within lower-phase states.

\begin{proof}
We prove the lemma by induction on $t$. 

In the base case when $t=0$, recall Remark \ref{rmk:bac} that $V_0^{(0)}=V^{(0)}$ and $V_0^{(p)}=\emptyset$ for all $p>0$, and so \eqref{eq:BAC-lem-validity} trivially holds.
    
For the induction case, assume \eqref{eq:BAC-lem-validity} holds for rounds $t$, and we want to prove for $t+1$. The key is to show that
\begin{equation}
    \interval(V_{t+1}^{(p+1)}) \subseteq \interval( V_{t}^{(p)} ),\ \forall \, p\ge 0.
    \label{eq:claim-bac-validity}
\end{equation}
Recall that $V_t^{(p)} \subseteq V_{t+1}^{(p)}$ by Remark \ref{rmk:bac}. Together with \eqref{eq:claim-bac-validity}, we have $\interval(V_{t+1}^{(p+1)}) \subseteq \interval(V_{t+1}^{(p)})$, which proves the induction case. The rest of the proof aims to prove \eqref{eq:claim-bac-validity}.

If no node updates from phase $p$ to $p+1$ in round $t$, then $V_{t+1}^{(p+1)} = V_{t}^{(p+1)}$ and \eqref{eq:claim-bac-validity} follows from the induction assumption.
Otherwise, consider some node $i$ that updates from phase $p$ to $p+1$ in round $t$. Its new state in $V_{t+1}^{(p+1)}$ is of form $v=\frac{1}{2}(\max (R_{i,\mathrm{low}})+ \min (R_{i,\mathrm{high}}))$. For both $R_{i,\mathrm{low}}$ and $R_{i,\mathrm{high}}$, they consist of $f+1$ messages each of which either comes from a Byzantine node or is in $V_t^{(q)}$ for some $q\ge p$. Since there are at most $f$ Byzantine nodes, there exist $q,q'\ge p$ and $u\in V_t^{(q)}, w\in V_t^{(q')}$ such that
\begin{equation*}
    u \le \max (R_{i,\mathrm{low}}) \le \min (R_{i,\mathrm{high}}) \le w.
\end{equation*}
By induction assumption, $V_t^{(q)}, V_t^{(q')} \subseteq V_t^{(p)}$. Consequently, $u \le v \le w$ and $v\in \interval(V_t^{(p)})$. This proves \eqref{eq:claim-bac-validity}. 
\end{proof}

Lemma \ref{lem:BACR-weak-validity} implies the validity of Algorithm DBAC upon substituting $t=t_{\mathrm{end}}$, $p=0$ and $q=p_{\pend}$ into Equation \eqref{eq:BAC-lem-validity}, which together with Remark \ref{rmk:bac} implies that
\begin{equation*}
    V^{(p_{\pend})} \subseteq \interval(V^{(0)}).
\end{equation*}


\vspace{3pt}
\noindent\textbf{DBAC: $\epsilon$-agreement}.~~We next present the proof for convergence. 
In addition to the effect of Byzantine values, we also need to consider the case when a node uses values from different phases when updating. This is more complicated to analyze than the case of DAC, since in our prior analysis, a node simply jumps to a future state that trivially satisfies the induction statement. Also, we cannot apply prior proofs for the traditional Byzantine fault model with  reliable channel (e.g., \cite{AA_Dolev_1986,LocalWait,AA_nancy,OR_Dolev_2004}) either. This is again because a pair of fault-free nodes may not use a common value to update their future states. A technical contribution is to identify a setup to use induction to prove the convergence. 

We need to prove that the new state value at a fault-free node falls in a smaller interval as it updates to a higher phase. In the base case, each node in phase $p$ receives non-Byzantine messages from a fixed multiset $V^{(p)}$. By the classical common value analysis and quorum intersection argument  \cite{AA_nancy,AA_Dolev_1986}, all nodes in the base case must receive at least one common value from a fault-free node. 

In the more general inductive step, this technique no longer works, because each node can also receive messages from higher phase(s). We need to show that all fault-free nodes must share some \underline{common information}. Even though each pair of fault-free nodes may \textit{not} receive a common value, we show that each fault-free node must receive at least one non-Byzantine message in a ``\textit{common multiset}'' (a generalized concept of common value). We then bound the range of this common multiset using $a_k^p$ and $A_k^p$, defined below. The common multiset allows us to derive the desired convergence rate.

\begin{definition}
For each $p$, define $a_k^p$ and $A_k^p$ recursively as:
\[
a_{k+1}^p = (a_k^p+w_{1}^p)/2, \enspace A_{k+1}^p = (A_k^p+w_{h}^p)/2,
\]
with initial values $a_0^p=A_0^p=w_{2f+1}^p$.
\end{definition}

Notice two useful properties. First, since $w_1^p\leq w_{2f+1}^p\leq w_h^p$, we have
\[
w_1^p\leq a_k^p \leq w_{2f+1}^p\leq A_k^p \leq w_h^p.
\]
Moreover, using geometric series, their explicit formulas are:
\begin{align*}
a_k^p 
&= 2^{-k}w_{2f+1}^p+\sum_{i=1}^k 2^{-i}w_1^p 
= w_1^p+2^{-k}(w_{2f+1}^p-w_1^p).
\end{align*}
Similarly, $A_k^p=w_h^p+2^{-k}(w_{2f+1}^p-w_h^p)$.

We are now ready to present the full proof below, and the illustration of common multiset is presented in Figure \ref{fig:BACR-agreement-sync}. 
Recall that $v_k^{p+1}$ denotes the phase-$(p+1)$ state of the $k$-th fault-free node that updates to phase $p+1$.

\begin{lemma}
\label{lem:BACR-agreement}
Suppose $n\geq 5f+1$. Then for every $k\in [h]$,
\begin{equation}
    \label{eq:BAC-thm-agreement}
    v_k^{p+1}\in[a_k^p, A_k^p],\ \forall p\geq 0.
\end{equation}
\end{lemma}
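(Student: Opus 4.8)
The plan is to prove the containment $v_k^{p+1}\in[a_k^p,A_k^p]$ by induction on $k$, establishing the bound for all phases $p\ge 0$ simultaneously, and to split each inductive step into the lower half $v_k^{p+1}\ge a_k^p$ and the upper half $v_k^{p+1}\le A_k^p$. The guiding observation is that when the $k$-th fault-free node $i$ updates to phase $p+1$ in some round $t$, every value it has accepted during phase $p$ is either one of the at most $f$ Byzantine values or a genuine fault-free value of phase $\ge p$. For the fault-free values of phase exactly $p$ I would use the global order statistics $w_1^p\le\cdots\le w_h^p$; for those of phase $\ge p+1$ I would invoke Lemma~\ref{lem:BACR-weak-validity} together with the induction hypothesis, so that each such higher-phase value already lies in $[a_{k-1}^p,A_{k-1}^p]$. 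Pinning down that ``at most $k-1$ fault-free nodes have left phase $p$ by round $t$'' is where I would use the monotonicity of $k(t,\cdot)$ from Remark~\ref{rmk:bac} and the fact that broadcasts are issued at the start of a round, so any higher-phase message $i$ uses was sent by a node that strictly precedes $i$ in the chronological order; monotonicity of $a$ and $A$ in their index then gives the containment at index $k-1$.

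For the lower half I would first record the two cost-free inequalities $\max(R_{i,\low})\ge w_1^p$ and $\min(R_{i,\high})\le w_h^p$, valid for any update because at most $f$ accepted values are Byzantine. The substance is $\min(R_{i,\high})\ge a_{k-1}^p$, which I would obtain by counting: the accepted values strictly below $a_{k-1}^p$ are confined to the $\le f$ Byzantine values plus the phase-$p$ fault-free values below $w_{2f+1}^p$ (at most $2f$ of them), since $a_{k-1}^p\le w_{2f+1}^p$ and every accepted higher-phase fault-free value is $\ge a_{k-1}^p$ by the hypothesis. Hence at most $3f$ accepted values fall below $a_{k-1}^p$, leaving at least $\lfloor\tfrac{n+3f}{2}\rfloor+1-3f=\lfloor\tfrac{n-3f}{2}\rfloor+1$ at or above it; the assumption $n\ge 5f+1$ makes this at least $f+1$, so the $(f{+}1)$-st largest accepted value satisfies $\min(R_{i,\high})\ge a_{k-1}^p$. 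Feeding $\max(R_{i,\low})\ge w_1^p$ and $\min(R_{i,\high})\ge a_{k-1}^p$ into $v_k^{p+1}=\tfrac12(\max(R_{i,\low})+\min(R_{i,\high}))$ and the recursion $a_k^p=\tfrac12(a_{k-1}^p+w_1^p)$ yields $v_k^{p+1}\ge a_k^p$.

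For the upper half a naive mirror of this count fails, and this is where the ``common multiset'' is genuinely needed. The target $\max(R_{i,\low})\le A_{k-1}^p$ asks for at least $f+1$ accepted values at or below $A_{k-1}^p$, but $A_{k-1}^p$ is anchored at $w_{2f+1}^p$ rather than near $w_h^p$, so there may be far more than $2f$ phase-$p$ fault-free values exceeding it, and an adversarial schedule could deliver to $i$ precisely the large ones. The fix is to stop reasoning node-by-node: I would isolate the set $Z$ of fault-free nodes whose current value lies in $[a_{k-1}^p,A_{k-1}^p]$ (all higher-phase nodes by the induction hypothesis, together with the $\ge 2f+1$ phase-$p$ nodes at or below $w_{2f+1}^p$), and use the quorum size to force overlap. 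Any two updating fault-free nodes accept from $\ge 2\big(\lfloor\tfrac{n+3f}{2}\rfloor+1\big)-n\ge 3f+1$ common ports, hence $\ge 2f+1$ common fault-free senders; the aim is to show that enough of these survive the trimming of the $f$ extreme entries of $R_{i,\low}$ and all lie in the common multiset whose interval is $[a_{k-1}^p,A_{k-1}^p]$, which with $\min(R_{i,\high})\le w_h^p$ and $A_k^p=\tfrac12(A_{k-1}^p+w_h^p)$ would give $v_k^{p+1}\le A_k^p$.

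I expect this upper-side common-multiset argument to be the main obstacle, and the step I would write most carefully, precisely because the asymmetric anchor $a_0^p=A_0^p=w_{2f+1}^p$ makes the two sides structurally different and the counting is razor-thin at the resilience threshold: the intersection margin is only $2f+1$ common fault-free neighbors, of which I must guarantee that more than $f$ both survive trimming and fall inside $[a_{k-1}^p,A_{k-1}^p]$. Checking that the number of phase-$p$ fault-free values a node can miss is small enough for this to hold — together with the inductive confinement of all higher-phase values to $[a_{k-1}^p,A_{k-1}^p]$ — is the crux on which the entire convergence argument rests, and it is exactly here that the hypothesis $n\ge 5f+1$ must be spent.
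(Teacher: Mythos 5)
Your lower half is correct and is exactly the paper's argument: at most $f$ Byzantine values, at most $2f$ phase-$p$ fault-free values below $w_{2f+1}^p\geq a_{k-1}^p$, and no accepted higher-phase value below $a_{k-1}^p$ by Lemma~\ref{lem:BACR-weak-validity} together with the induction hypothesis, so at least $\lfloor\frac{n-3f}{2}\rfloor+1\geq f+1$ accepted values are $\geq a_{k-1}^p$ and hence $\min(R_{i,\high})\geq a_{k-1}^p$. The gap is the upper half, which you explicitly leave as a plan, and that plan cannot be completed: your own worry that ``an adversarial schedule could deliver to $i$ precisely the large ones'' is realized already at $k=1$, where there is no second updater against which to run a quorum-intersection argument. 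Concretely, take $n=10$, $f=1$, $h=9$, and phase-$p$ fault-free values $0,0,0,1,1,1,1,1,1$. The adversary delivers to a node holding $1$ the five other values equal to $1$ plus one very large Byzantine value; the node reaches $|R_i|=\lfloor\frac{n+3f}{2}\rfloor+1=7$ and computes $\frac{1}{2}(r_{2}+r_{6})=1$, whereas $A_1^p=\frac{1}{2}(w_{3}^p+w_{9}^p)=\frac{1}{2}$. So $v_1^{p+1}\notin[a_1^p,A_1^p]$: the upper containment in \eqref{eq:BAC-thm-agreement} fails as stated, precisely because $A_k^p$ is anchored at $w_{2f+1}^p$, which sits near the bottom of $W^{(p)}$ while the updating node may hear only from the top of it.

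For what it is worth, you should not try to match the paper on this half: the paper disposes of it with the word ``symmetrically'' plus a footnote whose count tacitly assumes the updating node receives $n-f$ fault-free values, when it in fact receives only about $\lfloor\frac{n+f}{2}\rfloor+1$ of them; the same example defeats that footnote. The natural repair is not an intersection argument but a re-anchoring, $A_0^p=w_{h-2f}^p$ (keeping $a_0^p=w_{2f+1}^p$), after which the upper half becomes the exact mirror of your lower half: at most $f$ Byzantine values, at most $2f$ phase-$p$ values above $w_{h-2f}^p$, and no higher-phase value above $A_{k-1}^p$, giving $\max(R_{i,\low})=r_{f+1}\leq A_{k-1}^p$ and $v_k^{p+1}\leq A_k^p$. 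Be aware, though, that this changes the conclusion: $A_n^p-a_n^p$ then equals $\range(V^{(p)})-2^{-n}\bigl(\range(V^{(p)})-(w_{h-2f}^p-w_{2f+1}^p)\bigr)$, which does not yield the uniform contraction rate $1-2^{-n}$ when $w_{h-2f}^p-w_{2f+1}^p$ is close to $\range(V^{(p)})$, so the downstream $\epsilon$-agreement theorem would need additional work. As submitted, your proof establishes only the lower containment.
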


\begin{proof}
    We prove the theorem by induction on $k$. 
    
    
    \vspace{3pt}
    \noindent\textbf{Base Case:}~
    In base case, fix $k=1$ and consider phase $p\geq 0$. Assume that node $i^*$ is the first fault-free node that proceeds to phase $p+1$ (breaking ties arbitrarily), and denote by $R_{\low},R_{\high}$ the recording lists of node $i^*$ in phase $p$.
    
    By construction, all fault-free states received by $i^*$ must be in phase $p$. In addition, at most $f$ received state values are Byzantine. Therefore, $\max(R_{\low})\geq w_1^p$ because $\max(R_{\low})$ reaches its minimum possible value in the worst case when $i^*$ receives these following $\lfloor (n+3f)/2 \rfloor+1$ states:
    \begin{equation*}
        u_1 \leq \ldots \leq u_f \leq w_1^p \leq \ldots \leq w_{\lfloor (n+3f)/2 \rfloor+1-f}^p.
    \end{equation*}
    Similarly, $\min(R_{\high})\geq w_{\lfloor (n+3f)/2 \rfloor+1-2f}^p$. Also, since $n\geq 5f+1$, $w_{\lfloor (n+3f)/2 \rfloor+1-2f}^p \geq w_{2f+1}^p$. Therefore,
    \[
    v_1^{p+1} = \frac{\max(R_{\low})+\min(R_{\high})}{2} \geq \frac{w_1^p+w_{2f+1}^p}{2} = a_1^p.
    \]
    
    Symmetrically, $\max(R_{\low}) \leq w_{2f+1}^p$ and $\min(R_{\high})\leq w_{\lfloor (n+3f)/2 \rfloor+1-2f}^p$. Recall that $h=|\calH|\geq n-f$ and $n \geq 5f+1$, so $w_{\lfloor (n+3f)/2 \rfloor+1-2f}^p\leq w_h^p$. Thus,
    \[
    v_1^{p+1} \leq \frac{w_{2f+1}^p+w_h^p}{2} = A_1^p.
    \]
    In conclusion, $v_1^{p+1}\in[a_1^p,A_1^p]$, for all $p\geq 0$.
    
    
    \vspace{3pt}
    \noindent\textbf{Induction Case:}~
    In induction case, assume Equation (\ref{eq:BAC-thm-agreement}) is true for all $i\in [k]$, and we want to prove for $k+1$. Consider an arbitrary phase $p$ and assume that node $j^*$ is the $(k+1)$-st fault-free node that proceeds to phase $p+1$ in round $t$. Denote $R_{\low},R_{\high}$ as recording lists of node $j^*$ in phase $p$, and denote the received state values of node $j^*$ as $r_1\leq\ldots\leq r_{\lfloor (n+3f)/2 \rfloor+1}$. Then $\max(R_{\low})=r_{f+1}$ and $\min(R_{\high})=r_{\lfloor (n+3f)/2 \rfloor+1-f}$.
    
    Note that every received state value $r$ must come from one of the three possible sources: (i) a Byzantine node; (ii) $r\in V_t^{(p)}$; or (iii) $r\in V_t^{(q)}$ for some $q>p$. Let's consider the latter two cases. 
    
\vspace{3pt}
\noindent\textit{First case}: Suppose $r\in V_t^{(q)}$ for some $q>p$. Then
\begin{align*}
    r \in \interval(V_t^{(q)}) 
    &\stackrel{(i)}{\subseteq} \interval(V_t^{(p+1)}) 
    \stackrel{(ii)}{\subseteq} [a_k^p, A_k^p].
\end{align*}
Here (i) follows from Lemma \ref{lem:BACR-weak-validity} and (ii) follows from the induction assumption.

\vspace{3pt}
\noindent\textit{Second case}: Suppose $r\in V_k^{(p)}$. Recall that $a_k^p\leq w_{2f+1}^p\leq A_k^p$, so we can partition $\interval(V^{(p)})$ into three parts as in Figure \ref{fig:BACR-agreement-sync}: $V_1=[w_1^p,a_k^p)$, $V_2=[a_k^p,A_k^p]$, and $V_3=(A_k^p,w_h^p]$.
    
    \begin{figure}[H]
        \centering
        \begin{tikzpicture}
        \draw[thick] (0,0) -- (7,0);
        
        \draw[thick] (0,-0.1) -- (0,0.1);
        \node (b) at (0,-0.5) {$w_1^p$};
        
        \draw[thick] (7,-0.1) -- (7,0.1);
        \node (c) at (7,-0.5) {$w_h^p$};
        
        \draw[thick] (1.5,-0.1) -- (1.5,0.1);
        \node (c) at (1.5,-0.5) {$a_k^p$};
        
        \draw[thick] (5.5,-0.1) -- (5.5,0.1);
        \node (c) at (5.5,-0.5) {$A_k^p$};

        \draw[thick] (3.5,-0.1) -- (3.5,0.1);
        \node (c) at (3.75,-0.5) {$w_{2f+1}^p$};
        
        \draw [thick,decorate,decoration={brace,amplitude=3pt}] (0,0.2) -- (1.5,0.2) node [black,midway,yshift=10pt] {$V_1$};
        \draw [thick,decorate,decoration={brace,amplitude=3pt}] (1.5,0.2) -- (5.5,0.2) node [black,midway,yshift=10pt] {$V_2$};
        \draw [thick,decorate,decoration={brace,amplitude=3pt}] (5.5,0.2) -- (7,0.2) node [black,midway,yshift=10pt] {$V_3$};
        
        \end{tikzpicture}
        \caption{Partition of $\interval(V^{(p)})$ into $V_1,V_2$ and $V_3$.}
        \label{fig:BACR-agreement-sync}
    \end{figure}
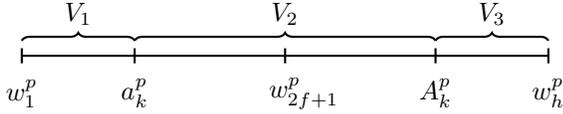
    
    By definition of $w_{2f+1}^p$, it is the $(2f+1)$-st largest state in $V^{(p)}$. Since $a_k^p\leq w_{2f+1}^p$, $r< a_k^p$ implies $r<w_{2f+1}^p$ and thus at most $2f$ state values from $V^{(p)}$ fall in $V_1$. 
    
    In conclusion, among the $\lfloor (n+3f)/2 \rfloor+1$ received state values, at most $f$ are Byzantine, and the rest are fault-free and greater than or equal to $w_1^p$. Moreover, among the fault-free state values, at most $2f$ are less than $a_k^p$. Hence, $r_{f+1}\geq w_1^p$ and $r_{3f+1}\geq a_k^p$. Finally, since $n\geq 5f+1$, $r_{\lfloor (n+3f)/2 \rfloor+1-f}\geq r_{3f+1}$ and thus
    \begin{equation*}
        v_{k+1}^{p+1} = \frac{r_{f+1}+r_{\lfloor (n+3f)/2 \rfloor+1-f}}{2} \geq \frac{w_1^p+a_k^p}{2} = a_{k+1}^p.
    \end{equation*}
    
    Symmetrically, at most $h-(2f+1)$ fault-free states from $V^{(p)}$ fall in $V_3$ (i.e., greater than $A_k^p$). Therefore, $r_{\lfloor (n+3f)/2 \rfloor+1-f}\leq w_h^p$ and $r_{f+1}\leq A_k^p$,\footnote{At most $h-(3f+1)$ states are greater than $A_k^p$, so the $((n-f)-[h-(3f+1)])$-th state is less than or equal to $A_k^p$. Since $n\geq 5f+1$ and $h\leq n$, $(n-f)-[h-(3f+1)]\geq f+1$. This implies $r_{f+1}\leq r_{(n-f)-[h-(3f+1)]}\leq A_k^p$.} and thus $v_{k+1}^{p+1} \leq A_{k+1}^p$. This proves Equation (\ref{eq:BAC-thm-agreement}) for $k+1$.
\end{proof}

\begin{theorem}
    Algorithm DBAC satisfies $\epsilon$-agreement.
\end{theorem}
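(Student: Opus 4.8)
The plan is to turn the per-node containment of Lemma~\ref{lem:BACR-agreement} into a contraction bound on $\range(V^{(p+1)})$ in terms of $\range(V^{(p)})$, extract a factor strictly below $1$, and iterate over phases until the range falls below $\epsilon$. The first step is to note that the intervals $[a_k^p, A_k^p]$ are nested and widen monotonically in $k$: since $w_1^p \le w_{2f+1}^p \le w_h^p$, the explicit formulas $a_k^p = w_1^p + 2^{-k}(w_{2f+1}^p - w_1^p)$ and $A_k^p = w_h^p + 2^{-k}(w_{2f+1}^p - w_h^p)$ show that $a_k^p$ is decreasing and $A_k^p$ is increasing in $k$. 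Hence every phase-$(p+1)$ state $v_k^{p+1}$ with $k \in [h]$ lies in the widest interval $[a_h^p, A_h^p]$, so $V^{(p+1)} \subseteq [a_h^p, A_h^p]$ and $\range(V^{(p+1)}) \le A_h^p - a_h^p$. Substituting the explicit formulas yields
\begin{equation*}
    A_h^p - a_h^p = (w_h^p - w_1^p)(1 - 2^{-h}) = (1 - 2^{-h})\,\range(V^{(p)}),
\end{equation*}
so DBAC contracts the range by a factor $\rho = 1 - 2^{-h}$ per phase. Since $h = |\calH| \le n$ we have $\rho \le 1 - 2^{-n} < 1$, a bound independent of the (unknown) exact value of $h$.

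Next I would iterate this contraction. As the inputs lie in $[0,1]$, $\range(V^{(0)}) \le 1$, so a routine induction on $p$ gives $\range(V^{(p)}) \le (1 - 2^{-n})^{p}$. Choosing $p_{\mathrm{end}} = \log_{1-2^{-n}}(\epsilon)$ as in Equation~(\ref{eq:p-end-byz}) then forces $\range(V^{(p_{\mathrm{end}})}) \le \epsilon$. By Theorem~\ref{thm:smallbac-termination} every fault-free node reaches phase $p_{\mathrm{end}}$ and outputs its phase-$p_{\mathrm{end}}$ state, and all such states belong to $V^{(p_{\mathrm{end}})}$; hence any two fault-free outputs differ by at most $\range(V^{(p_{\mathrm{end}})}) \le \epsilon$, which is exactly $\epsilon$-agreement.

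I expect the main subtlety, rather than a genuine obstacle since Lemma~\ref{lem:BACR-agreement} already does the heavy lifting, to be the passage from the family of per-node intervals to a single global range bound: one must recognize that the correct envelope is the $k = h$ interval and justify the monotonicity of $a_k^p$ and $A_k^p$ in $k$ before the telescoping cancellation produces the clean factor $1 - 2^{-h}$. A secondary point is that $h$ is unknown to the nodes, so the rate $1 - 2^{-h}$ cannot be used to fix $p_{\mathrm{end}}$ directly; replacing it by the worst-case value $1 - 2^{-n}$, valid because $h \le n$, resolves this while keeping $p_{\mathrm{end}}$ finite, if possibly conservative.
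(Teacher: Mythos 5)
Your proposal is correct and follows essentially the same route as the paper: both convert Lemma~\ref{lem:BACR-agreement} into the containment $V^{(p+1)}\subseteq[a_k^p,A_k^p]$ for the largest relevant $k$, use the geometric-series formulas to get the contraction factor $1-2^{-n}$, and iterate to phase $p_{\mathrm{end}}$ as in Equation~(\ref{eq:p-end-byz}). The only difference is cosmetic: you pass through $[a_h^p,A_h^p]$ and explicitly justify the monotonicity of the intervals in $k$ before relaxing $h$ to $n$, whereas the paper applies the bound at index $n$ directly.
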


\begin{proof}
    Lemma \ref{lem:BACR-agreement} implies that $\interval(V^{(p+1)})\subseteq [a_n^p,A_n^p]$. Furthermore, by definition and geometric series, we have
    \begin{align*}
        &a_n^p=w_{1}^p+2^{-n}(w_{2f+1}^p-w_1^p), ~~~\text{and}\\~~~ &A_n^p=w_{h}^p+2^{-n}(w_{2f+1}^p-w_h^p).
    \end{align*}
    and  $w_1^p=\min V^{(p)}$ and $w_h^p=\max V^{(p)}$.
    Hence, $A_n^p-a_n^p$ gives an upper bound of convergence rate, which is $1-2^{-n}$ because $\range(V^{(p+1)})\leq (1-2^{-n})\range(V^{(p)})$.
    
    In conclusion,  DBAC satisfies $\epsilon$-agreement at $p_{\mathrm{end}}$, where
    \begin{equation}
        \label{eq:p-end-byz}
        p_{\mathrm{end}} = \frac{\log\epsilon}{\log(1-2^{-n})},
    \end{equation}
    because for all $p\geq p_{\mathrm{end}}$,
    $\range(V^{(p)})\leq (1-2^{-n})^{p_{\pend}}\leq \epsilon$.
\end{proof}

\section{Impossibility Results}
\label{s:impossible}

\subsection{Exact Consensus}

We first show that \ddegree{1, n-2} is not sufficient for solving binary exact consensus, even when all nodes are fault-free. Gafni and Losa prove the following theorem for a complete graph in a recent paper \cite{Gafni_TimeNotHealer_SSS23}:\footnote{As noted in \cite{Gafni_TimeNotHealer_SSS23}, the following theorem is different from, although similar to, the main result from the seminal paper by Santoro and Widmaye \cite{Santoro_Link}. The model in \cite{Santoro_Link} considers a synchronous system in which one node fails to send some of its messages per round.}

\begin{theorem}[\textbf{from \cite{Gafni_TimeNotHealer_SSS23}}]
Consider a synchronous model where in every round, each node might fail to receive one of the messages sent to it. It is impossible to achieve deterministic binary exact consensus, even when all nodes are fault-free.
\end{theorem}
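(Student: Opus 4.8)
The plan is to run a standard indistinguishability-plus-valence argument, exploiting the fact that the stated fault model is strong enough to let the adversary \emph{completely silence} a single node forever. I would assume, for contradiction, that some deterministic algorithm $A$ satisfies termination, validity, and agreement in this model, and I would take $n \ge 2$ (for $n=1$ consensus is trivial).

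First I would prove the key lemma: if two configurations $C$ and $C'$ are \emph{neighbors}, meaning they agree on the state of every node except a single node $v$, then $A$ produces the same decision value from $C$ and from $C'$. To see this, fix the adversary schedule that, in every round, causes each of the $n-1$ nodes $u \ne v$ to fail to receive the message sent by $v$; this spends exactly one omission per receiver and is therefore permitted. Running $A$ from $C$ and from $C'$ under this schedule, an induction on the round number shows that every node $u \ne v$ has an identical state sequence in the two executions: the messages such a node receives come only from nodes other than $v$, whose states and hence messages coincide in the two runs, so its transition is the same in both. By termination both executions decide, and by agreement each decision is unanimous; since the nodes $u \ne v$ decide the same value in both runs and at least one such node exists, the two unanimous decisions must coincide.

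Next I would chain the lemma across the Boolean input cube. Let $C^{(k)}$ be the initial configuration in which the first $k$ nodes start with input $1$ and the remaining $n-k$ start with input $0$, so that $C^{(0)}$ is the all-$0$ configuration and $C^{(n)}$ is the all-$1$ configuration. Consecutive configurations $C^{(k)}$ and $C^{(k+1)}$ differ only in the input, hence the initial state, of one node, so they are neighbors and, by the lemma, yield the same decision. Transitivity forces all of $C^{(0)}, \ldots, C^{(n)}$ to decide the same value. But validity requires $C^{(0)}$ to decide $0$ and $C^{(n)}$ to decide $1$, a contradiction, which establishes the impossibility.

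I expect the only delicate point to be the invariant inside the lemma: one must argue carefully that silencing $v$ keeps the two executions indistinguishable to all other nodes for \emph{all} rounds, not merely one, and that this is consistent with the per-round, per-receiver omission budget of the model. Everything else reduces to the interplay of termination (every run decides), agreement (each run decides unanimously), and validity (the two endpoints of the chain are pinned to different values). Compared with a full FLP-style bivalence proof, no bivalent-configuration machinery is needed here, precisely because the fault model is powerful enough to hide one node entirely rather than merely delay its influence.
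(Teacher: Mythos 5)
The paper does not actually prove this theorem---it is imported from \cite{Gafni_TimeNotHealer_SSS23} and used only to derive the corollary for \ddegree{1,n-2}---so there is no in-paper proof to compare against; your attempt must stand on its own. Its first half does: the observation that the per-receiver omission budget lets the adversary drop the link from $v$ to every $u\neq v$ in every round, silencing $v$ forever, is correct (and is exactly the observation the paper uses for its corollary), and your lemma, read precisely, is also correct: under the specific schedule $\sigma_v$ that silences $v$, the executions from two initial configurations differing only at $v$ are indistinguishable to every node $u\neq v$, so by agreement the two unanimous decisions coincide.

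The gap is in the chaining step. What the lemma gives is $d(C^{(k)},\sigma_{k+1})=d(C^{(k+1)},\sigma_{k+1})$, where $d(C,\sigma)$ denotes the decision of the execution from $C$ under schedule $\sigma$ and $\sigma_{k+1}$ silences the single node whose input changed. To ``chain by transitivity'' you implicitly need $d(C^{(k+1)},\sigma_{k+1})=d(C^{(k+1)},\sigma_{k+2})$, i.e.\ that the decision is a function of the initial configuration alone. Nothing forces this: agreement constrains only a single execution, and a correct algorithm may legitimately decide $0$ from a mixed-input configuration under one schedule and $1$ under another---only the all-$0$ and all-$1$ endpoints are pinned down by validity. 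The dependence on the schedule is not a technicality here: for $n\geq 3$, silencing $v$ forces every other node to receive \emph{all} remaining messages each round (its one omission is spent on $v$), so the surviving $n-1$ nodes form a reliable complete network whose decision can genuinely depend on which node was silenced. Your argument therefore establishes only that, if every initial configuration were univalent, adjacent ones with opposite valences would give a contradiction---i.e.\ that some initial configuration is bivalent. The remaining, and essential, content of the theorem is that the adversary can preserve bivalence through every round despite termination (a Santoro--Widmayer-style round-by-round valency argument, or the reduction used by Gafni and Losa), and that part is missing from your proposal.
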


This theorem implies the following corollary, since by the definition of \ddegree{1, n-2}, the message adversary can force any node to drop any single message sent to it.

\begin{corollary}
It is impossible to achieve deterministic binary exact consensus in the anonymous dynamic network with \ddegree{1, n-2}, even when all nodes are fault-free.    
\end{corollary}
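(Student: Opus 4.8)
The plan is to prove the corollary by a direct reduction to the impossibility theorem of Gafni and Losa, identifying the \ddegree{1,n-2} message adversary with the single-omission adversary of their model. Since an impossibility result only needs to exhibit one bad base graph, I would first fix the base communication graph $G$ to be the complete directed graph on the $n$ nodes, so that $E$ contains every link $(u,v)$ with $u\neq v$; this matches the complete-graph setting in which \cite{Gafni_TimeNotHealer_SSS23} is stated.

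The crux of the argument is to verify that, on this base graph, the \ddegree{1,n-2} constraint grants the message adversary \emph{exactly} the power assumed by Gafni and Losa. In any single round, a fault-free node $i$ has $n-1$ potential incoming neighbors, namely all other nodes, since $E$ contains no self-loop. Instantiating Definition~\ref{def:ddegree} with $T=1$ and $D=n-2$ forces $i$ to retain incoming links from at least $n-2$ of these $n-1$ candidates in \emph{every} round, so in each round the adversary may suppress at most one incoming link at each node. Combined with the always-reliable self-message, this says precisely that $i$ receives its own value and fails to receive at most one message sent by another node, which is the Gafni--Losa condition ``each node might fail to receive one of the messages sent to it.'' I would also note that the two adversaries are adaptive in the same sense: our adversary picks $\E(t)$ from the nodes' current states and the algorithm specification, matching the adaptivity permitted in their model. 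This exact matching of both the per-round omission budget and the adaptivity is the step I expect to be the most delicate, because it hinges on the off-by-one counting induced by the no-self-loop convention ($D\le n-1$) together with the separately guaranteed self-delivery.

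Given this correspondence, any deterministic algorithm $\calA$ for our anonymous model, run on the complete base graph against a \ddegree{1,n-2} adversary, is simply a deterministic algorithm running in the Gafni--Losa model; by their theorem it cannot solve binary exact consensus even with all nodes fault-free, a contradiction. The one point requiring care is the direction of the reduction with respect to anonymity. Impossibility must be inherited by the \emph{more constrained} anonymous model, so I would observe that anonymity only removes information from the algorithm: any anonymous algorithm is, in particular, a valid (identity-free, hence restricted) algorithm in the Gafni--Losa model, so the impossibility propagates downward to our setting rather than needing to be lifted. Finally, since we set $f=0$, no node-fault behavior is invoked and the message adversary alone yields the offending execution, which is exactly what the statement ``even when all nodes are fault-free'' requires.
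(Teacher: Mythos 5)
Your proposal is correct and follows essentially the same route as the paper, which justifies the corollary in a single sentence by observing that under \ddegree{1,n-2} the adversary can force any node to drop any single incoming message, thereby matching the Gafni--Losa omission model. Your expanded verification of the per-round omission budget and of the direction of the reduction with respect to anonymity is a faithful (and more careful) elaboration of that same argument.
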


\subsection{Crash-tolerant Approximate Consensus}


\begin{theorem}
\ddegree{T, \lfloor \frac{n}{2}\rfloor} and $n \geq 2f+1$ are together necessary for solving deterministic crash-tolerant approximate consensus.
\end{theorem}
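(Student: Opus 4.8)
The plan is to prove necessity by contraposition: for each way in which the two conditions can fail, I exhibit a dynamic graph together with a crash pattern that satisfies the \emph{weakened} hypothesis yet on which no deterministic algorithm can achieve $\epsilon$-agreement for $\epsilon<1$. There are two independent sub-claims — that $D\geq\lfloor n/2\rfloor$ is needed and that $n\geq 2f+1$ is needed — and each is handled by an indistinguishability/partition argument. Throughout I rely on the standard fact that if two executions deliver to a fault-free node exactly the same messages (with the same port labels) in every round, that node produces the same output; combined with \emph{Validity}, a node that only ever hears the single input value $c$ must output exactly $c$.

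First sub-claim ($D\geq\lfloor n/2\rfloor$, already with $f=0$). Suppose only \ddegree{T, \lfloor n/2\rfloor - 1} is guaranteed. I partition $V=A\sqcup B$ with $|A|=\lfloor n/2\rfloor$ and $|B|=\lceil n/2\rceil$, and in every round let $\E(t)$ be the disjoint union of the two complete graphs on $A$ and on $B$, dropping every cross link. Then an $A$-node has $|A|-1=\lfloor n/2\rfloor-1$ distinct incoming neighbors and a $B$-node has $|B|-1\geq\lfloor n/2\rfloor-1$, so \ddegree{1, \lfloor n/2\rfloor - 1} holds and no node is faulty. Assigning input $0$ to all of $A$ and $1$ to all of $B$, the view of every $A$-node is identical to its view in the all-zero execution, forcing it to output $0$; symmetrically every $B$-node outputs $1$. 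This breaks $\epsilon$-agreement for $\epsilon<1$, so $D=\lfloor n/2\rfloor-1$ is insufficient and $D\geq\lfloor n/2\rfloor$ is necessary.

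Second sub-claim ($n\geq 2f+1$). Here I assume $n\leq 2f$ and spend the crash budget. The target is the classical split-brain: two disjoint fault-free groups $A,B$ carrying inputs $0$ and $1$, arranged so that neither group ever receives a message from the other, so each runs exactly as in a constant-input execution and is forced (by the first paragraph) to output its own constant. Because $n\leq 2f$ there is room to reserve a set $C$ of crashed nodes that serve purely to pad incoming degree: a crashed node remains an endpoint of the links in $\E(t)$ counted by \ddegree but delivers nothing, so each surviving node of $A$ (resp. $B$) sees $\lfloor n/2\rfloor$ distinct incoming neighbors over a window while actually receiving values only from its own group. I will choose the sizes so that $|A|-1+|C|\geq\lfloor n/2\rfloor$, $|B|-1+|C|\geq\lfloor n/2\rfloor$, and $|C|\leq f$; the inequality $n\leq 2f$ is exactly what makes this system of constraints solvable. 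Since the two groups share no common value and any $f$-crash-tolerant algorithm must let each group terminate without ever hearing from the other, the outputs are pinned to $0$ and $1$, contradicting $\epsilon$-agreement.

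The main obstacle is this second sub-claim, and precisely certifying that the split-brain execution genuinely lies in the class \ddegree{T, \lfloor n/2\rfloor}. Cutting all $A$--$B$ links naively pushes each surviving node's live in-degree below $\lfloor n/2\rfloor$ — this is exactly the phenomenon that makes the threshold $\lfloor n/2\rfloor$ defeat the first sub-claim's split — so I must lean on the crashed padding group $C$ to restore the \emph{counted} in-degree without restoring any delivered information, and I must confirm that the counting convention for crashed endpoints is the one under which \ddegree is stated. The delicate bookkeeping (matching group sizes against $\lfloor n/2\rfloor$ and $f$, and arguing that the algorithm cannot afford to wait for a message that never arrives without risking non-termination on a genuine $f$-crash execution) is where $n\leq 2f$ is used sharply; the remaining indistinguishability steps and the appeals to \emph{Validity} and \emph{Termination} are then routine.
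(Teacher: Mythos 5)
Your first sub-claim (necessity of $D\ge\lfloor n/2\rfloor$, with $f=0$) is the same two-cliques partition argument the paper sketches, and it is fine. The genuine problem is the second sub-claim. Your padding construction stands or falls on the convention that a link from an already-crashed, silent node still counts toward the required dynamic degree, and under that convention the construction proves far too much: the constraints $|A|-1+|C|\ge\lfloor n/2\rfloor$, $|B|-1+|C|\ge\lfloor n/2\rfloor$, $|C|\le f$ are satisfiable whenever $f\ge 2$ and $n\ge 4$ (take $|C|=2$ and $|A|=|B|=\lfloor(n-2)/2\rfloor$), with no use of $n\le 2f$ at all --- e.g.\ $n=10$, $f=2$, $|A|=|B|=4$, $|C|=2$ gives every fault-free node exactly $5=\lfloor n/2\rfloor$ counted in-neighbors while each group hears only from itself. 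That execution would also defeat the paper's algorithm DAC (which waits for $\lfloor n/2\rfloor+1$ received phase-$p$ values and would block forever), contradicting the sufficiency half of the paper. So your remark that ``$n\le 2f$ is exactly what makes this system of constraints solvable'' is false, and a necessity proof for $n\ge 2f+1$ that never actually invokes $n\le 2f$ cannot be correct: either silent crashed endpoints do not count (and then your split-brain simply does not lie in the class \ddegree{T,\lfloor n/2\rfloor}, since each survivor's delivering in-degree stays below $\lfloor n/2\rfloor$), or they do count and your argument refutes the rest of the paper. There is also an independent hole at $n=2$, $f=1$, where no padding node can be spared.

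The paper's own argument for this part avoids crashed padding entirely. It first considers an auxiliary scenario in which $f$ nodes crash at time $0$ over the complete-every-round graph; since $n\le 2f$, each survivor must output after hearing from only $n-f\le f$ nodes, within some finite number $R$ of rounds determined by the algorithm. It then runs a \emph{crash-free} split execution: two groups of size at most $f$ with inputs $0$ and $1$, completely disconnected from each other during rounds $1,\dots,R$ and fully connected thereafter, with the window length chosen as $T'=R+1$. Every $T'$-round window then contains a complete round, so the degree requirement is met even at level $n-1$, yet each group is indistinguishable from the auxiliary scenario during rounds $1,\dots,R$ and has irrevocably output its own input before any cross link appears. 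The idea you are missing is that the degree budget is charged per window, so the adversary may starve the nodes for $R$ rounds and repay the degree debt only after the outputs are locked in; this is precisely where $n\le 2f$, and the fact that $T$ is finite but unknown to the nodes, are genuinely used.
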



\begin{proof}[Proof Sketch]
The proof consists of two parts. First, we show that it is impossible to achieve deterministic approximate consensus in an anonymous dynamic network with \ddegree{1, \lfloor \frac{n}{2}\rfloor-1}, even when all nodes are fault-free.    
The proof for the first part is by contradiction. Assume that there exists a deterministic approximate consensus algorithm in a dynamic graph with \ddegree{1, \lfloor \frac{n}{2}\rfloor-1}. However, to satisfy termination, a node $i$ must be able to make decision after communicating with only $\lfloor \frac{n}{2}\rfloor $ nodes (including $i$ itself). Therefore, it is possible for the message adversary to pick $\E(t)$ in a way that there are two non-overlapping groups of nodes that do not communicate with each other, making $\epsilon$-agreement impossible when these two group of nodes have different inputs. 

Second, we show that there exists a finite $T'$ such that it is impossible to achieve deterministic approximate consensus in an anonymous dynamic network with \ddegree{T', n-1} and $n \leq 2f$. The reason that we cannot prove for a general $T$ in this case is that it is trivial to design an algorithm that works for a fixed number of $T$, as each node can simply repeat the same process for $T$ rounds. However, we argue that it is impossible to do so with an unknown $T$. 


Assume that there exists a deterministic approximate consensus algorithm $\mathbb{A}$ in a dynamic graph with $n \leq 2f$ and \ddegree{T, n-1} for a fixed $T$. Now, we need to find a $T'$ so that $\mathbb{A}$ is incorrect, deriving a contradiction. 

Observe that to satisfy termination after $f$ nodes crash before the execution of the algorithm, a node $i$ must be able to make decision after communicating with only $\leq f$ nodes, since $n \leq 2f$. Without loss of generality, assume that in this scenario, namely Scenario 1, $\mathbb{A}$ takes $R$ rounds. 

Next we show that by choosing $T' = R+1$, $\mathbb{A}$ is incorrect. Consider the graph with \ddegree{R+1, n-1}. The message adversary can then pick $\E(t)$ for $1\leq t \leq R$ so that there are two non-overlapping groups of nodes that do not communicate with each other. This is possible because (i) this scenario is indistinguishable from Scenario 1, so nodes must output in $R$ rounds; (ii) within $R$ rounds, nodes only communicate with $\leq f$ nodes in $\mathbb{A}$; and (iii) $n \leq 2f$. 

Finally, consider the execution where these two non-overlapping groups are given different input value $0$ and $1$, respectively. Since each group makes a decision without communicating with another group, $\epsilon$-agreement is violated in $\mathbb{A}$, a contradiction.
\end{proof}

\subsection{Byzantine Approximate Consensus}


\begin{theorem}
    \ddegree{1, \lfloor \frac{n+3f}{2} \rfloor} and $n \geq 5f$ are together necessary for solving deterministic Byzantine approximate consensus.
\end{theorem}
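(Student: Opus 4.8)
The plan is to mirror the two-part structure of the crash-tolerant necessity proof, upgrading each indistinguishability argument to cope with Byzantine faults. We prove (Part 1) that the dynaDegree threshold is tight, i.e.\ \ddegree{1, \lfloor (n+3f)/2 \rfloor - 1} is insufficient, and (Part 2) that the resilience bound is tight, i.e.\ there is a finite $T'$ for which no algorithm can tolerate $n \le 5f$ under \ddegree{T', n-1}. Both parts argue by contradiction from an assumed correct deterministic algorithm $\mathbb{A}$, and together they show that weakening either condition makes Byzantine approximate consensus impossible.

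For Part 1, the starting observation is the same as in the crash case: to terminate, a fault-free node must be able to decide after receiving messages from a set of at most $q = \lfloor (n+3f)/2 \rfloor$ sources (itself together with the $D = q-1$ incoming neighbors the adversary is forced to supply). The crucial difference is that the ``two isolated groups'' construction used in the crash proof no longer closes numerically, since two internally-complete fault-free groups, each of size $\ge q$, cannot be packed into $n$ nodes once $f \ge 1$. I would therefore replace isolation by a hybrid (chain) argument that uses the $f$ Byzantine nodes to bridge views. The key counting fact is that the adversary can make two decision-quorums of size $q$ overlap in as few as $2q - n \le 3f$ sources, so after it corrupts $f$ of the shared sources at most $2f$ common fault-free values remain --- one short of the $2f+1$ that the trimming rule of DBAC needs to guarantee a surviving common value. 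I would exploit this slack to build a sequence of executions $E_0, E_1, \dots, E_m$ with all-$0$ non-Byzantine inputs at one end and all-$1$ inputs at the other, where consecutive executions are made indistinguishable to some fault-free decider by having the Byzantine nodes in its quorum overlap present matching values. Validity forces the extreme executions to decide near $0$ and near $1$, while the indistinguishability chain forces neighbouring decisions to stay within $\epsilon$, which is contradictory once the chain is long enough.

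For Part 2, I would reuse the crash-case trick of an \emph{unknown} $T$. Assume $\mathbb{A}$ is correct for a fixed $T$ under \ddegree{T, n-1} and $n \le 5f$. Run $\mathbb{A}$ in a ``small'' scenario in which $f$ nodes are effectively silenced (here simulated away by Byzantine behaviour rather than crashed), and let $R$ be the round by which the remaining nodes must decide. Setting $T' = R+1$, the message adversary can keep the genuine cross-group links absent for the first $R$ rounds while the Byzantine nodes feed each side a self-consistent, complete-looking view; because this is indistinguishable from the small scenario, each side must decide by round $R$ using only its own part. Since $n \le 5f$, the nodes can be partitioned so that two sides are driven toward inputs $0$ and $1$ with the Byzantine budget still sufficient to make each side's local view internally valid, violating $\epsilon$-agreement.

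The main obstacle is Part 1. Unlike the crash setting, where disagreement follows immediately from isolating two groups, here every fault-free node necessarily hears from a large quorum that overlaps every other node's quorum, so disagreement can only be manufactured by carefully orchestrating Byzantine values inside those overlaps. Making the chain of executions genuinely pairwise-indistinguishable --- while simultaneously respecting \ddegree{1, \lfloor (n+3f)/2 \rfloor - 1} in every round and keeping the number of corrupted nodes at most $f$ in each execution --- is the delicate part, and it is exactly the point at which the $2f$-versus-$2f+1$ common-value gap identified above is converted into an $\epsilon$-agreement violation.
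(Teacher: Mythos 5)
Your two-part decomposition matches the paper's, and your Part~2 (fix a hypothetical algorithm correct for some $T$, extract the decision round $R$ from a scenario where $f$ nodes are effectively silenced, set $T'=R+1$, and split the network) is essentially the argument the paper itself gestures at and omits. The problem is Part~1, which is where all the content of this theorem lives, and there you have a genuine gap: you correctly observe that two disjoint quorums of size $q=\lfloor (n+3f)/2\rfloor$ do not fit in $n$ nodes, but you then conclude that the two-group construction must be abandoned in favour of a chain of executions $E_0,\dots,E_m$ --- and you never construct that chain. You explicitly flag the pairwise-indistinguishability of consecutive executions (while respecting the degree bound and the per-execution Byzantine budget) as ``the delicate part,'' but that delicate part \emph{is} the proof; as written, Part~1 is a plan, not an argument. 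Moreover, your motivating count (``$2f$ common fault-free values, one short of the $2f+1$ that DBAC's trimming rule needs'') explains why DBAC's threshold is what it is, which is not the same as showing that \emph{no} algorithm can succeed below the threshold.

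The paper's actual Part~1 shows that the two-group construction does close numerically once the groups are allowed to \emph{overlap}: take groups $A$ and $B$ each of size $q$ overlapping in $2q-n\approx 3f$ nodes, make the middle $f$ nodes of the overlap Byzantine, give input $0$ to the $\lfloor (n-f)/2\rfloor$ lowest-indexed nodes and input $1$ to the $\lfloor (n-f)/2\rfloor$ highest-indexed ones, have the adversary confine each group to hearing only from itself (legal under \ddegree{1,\lfloor (n+3f)/2\rfloor-1}), and let the Byzantine nodes equivocate --- claiming input $0$ toward $A$ and input $1$ toward $B$, which anonymity makes undetectable. Then $A$'s view contains exactly $f$ nodes claiming input $1$, and since there is an indistinguishable execution in which precisely those $f$ nodes are the Byzantine ones and every non-Byzantine node has input $0$, validity forces $A$ to output $0$; symmetrically $B$ must output $1$, contradicting $\epsilon$-agreement in a single pair of scenarios rather than along a long chain. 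If you want to salvage your version, you would need to either carry out the hybrid chain in full detail or, more simply, notice that the overlap trick lets you keep the direct two-scenario validity argument you already use in the crash case.
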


\begin{proof}
The proof also consists of two parts. First, we show that it is impossible to achieve deterministic approximate consensus in the anonymous dynamic network with \ddegree{1, \lfloor \frac{n+3f}{2} \rfloor - 1} and $n \geq 3f+1$.\footnote{The necessity of $n \leq 3f$ is from prior work \cite{Lynch_DistributedAlgorithms1996,welch_book}.} Assume that there exists a deterministic Byzantine approximate consensus algorithm $\mathbb{A}$. 

Observe that to satisfy termination, a node $i$ must be able to make decision after communicating with only $\lfloor \frac{n+3f}{2} \rfloor$ nodes, including $i$ itself. Consider the following scenario:

\begin{itemize}
    \item Divide nodes into two groups: group A contains node $i_{1}, i_2, \dots, i_{\lfloor \frac{n+3f}{2} \rfloor}$, and group B contains $i_{\lfloor \frac{n-3f}{2} \rfloor+1}, \dots, i_{n}$. Note that each group has size $\lfloor \frac{n+3f}{2} \rfloor$, and there are $3f$ nodes in the intersection of the two groups.
    

    

    



    \vspace{5pt}

    \item Nodes $i_{\lfloor  \frac{n-f}{2} \rfloor+1}, \dots i_{\lfloor \frac{n+f}{2} \rfloor}$ are Byzantine faulty.

    \vspace{5pt}

    \item The message adversary picks $\E(t)$ in a way that nodes in group A receive only messages from group A and nodes in group B receive only messages from group B. 

    \vspace{5pt}

    \item Nodes $i_1, \dots, i_{\lfloor \frac{n-f}{2} \rfloor }$ have input $0$.

    \vspace{5pt}

    \item Nodes $i_{\lfloor \frac{n+f}{2} \rfloor+1}, \dots, i_n$ have input $1$.


    \item Byzantine nodes have the following behavior:  they behave to group A, as if they had input $0$; and behave to group $B$, as if they had input $1$. This is possible because of the anonymity assumption. Because the port numbering is potentially different at each node, Byzantine nodes have the ability to equivocate without being caught. In other words, useful primitives like reliable broadcast is impossible. 
\end{itemize}

Now, we make the following observations:

\begin{itemize}


    \item From the perspective of group A, only nodes $i_{\lfloor \frac{n+f}{2} \rfloor + 1}, \dots, i_{\lfloor \frac{n+3f}{2} \rfloor}$ have input $1$. The rest have input $0$. Note that the number of nodes with input $1$ in this case is exactly $f$.

    \item From the perspective of group B, only nodes $i_{\lfloor \frac{n-3f}{2} \rfloor+1}, \dots, i_{\lfloor \frac{n-f}{2} \rfloor}$ have input $0$. The rest have input $1$. Note that the number of nodes with input $0$ in this case is exactly $f$.
\end{itemize}

To satisfies validity, the first observation forces nodes in group A to output $0$, because there are only $f$ nodes with input $1$ and all of them could be Byzantine faulty. If group A outputs $1$ in this scenario, then it is straightforward to construct an indistinguishable scenario such that only nodes $i_{\lfloor \frac{n+f}{2} \rfloor + 1}, \dots, i_{\lfloor \frac{n+3f}{2} \rfloor}$ are Byzantine faulty (and nodes $i_{\lfloor  \frac{n-f}{2} \rfloor+1}, \dots i_{\lfloor \frac{n+f}{2} \rfloor}$ are fault-free), causing a violation of validity. 

Similarly, group B must output $1$, violating the $\epsilon$-agreement property. 

The second part of proving that there exists a finite $T'$ such that it is impossible to achieve deterministic Byzantine approximate consensus in the anonymous dynamic network with $n \leq 5f$ and \ddegree{T', n-1} is similar to the case of crash faults. We omit it here for lack of space. 
\end{proof}



\section{Conclusion and Discussion}
\label{s:discussion}

In conclusion, we study the feasibility of fault-tolerant consensus in anonymous dynamic networks. We identify the necessary and sufficient conditions for solving crash-tolerant and Byzantine approximate consensus. 


There are many interesting open problems in our model: 

\begin{itemize}
    \item We assume that nodes do not know the base graph $G$. Does knowing the graph help? 

    \item In practical applications, nodes might only know the IDs for a small set of other nodes. Does this knowledge help in increasing resilience or reducing the requirement for dynamic degree?

    \item What is the optimal convergence rate for Byzantine approximate consensus algorithms? 

    \item Observe that both our algorithms complete in $T\cdot p_{end}$ rounds in the worst case. For practical applications, it is useful to assume a probabilistic message adversary that picks $\E(t)$ randomly and investigate algorithms achieving the optimal expected number of rounds. 

    \item With unlimited bandwidth, one can indeed  simulate the algorithm in \cite{DolevLPSW86} by piggybacking the entire history of each node's past messages when sending the current state value. This achieves convergence rate of $1/2$. In fact, DBAC can improve the convergence rate by piggybacking a limited set of old messages, under limited bandwidth. It is interesting to identify the trade-off between bandwidth and convergence rate.
\end{itemize}



\end{document}